\documentclass[a4paper, 12pt]{article}

\usepackage[sort&compress]{natbib}
\bibpunct{(}{)}{;}{a}{}{,} 

\usepackage{amsthm, amsmath, amssymb, mathrsfs, multirow, url,bm}
\usepackage{graphicx} 
\usepackage{ifthen} 
\usepackage{amsfonts}
\usepackage[usenames]{color}
\usepackage{fullpage}
\usepackage{caption}
\usepackage{subcaption}



\theoremstyle{plain} 
\newtheorem{thm}{Theorem}

\newtheorem{lem}{Lemma}

\theoremstyle{definition}

\newtheorem{asmp}{Assumption}

\theoremstyle{remark}

\newcommand{\RR}{\mathbb{R}}

\newcommand{\E}{\mathsf{E}}

\newcommand{\prob}{\mathsf{P}}

\newcommand{\eps}{\varepsilon}

\newcommand{\tr}{\mathrm{tr}}

\newcommand{\iid}{\overset{\text{\tiny iid}}{\sim}}

\newcommand{\nm}{\mathsf{N}}
\newcommand{\N}{\mathsf{N}}

\newcommand{\chisq}{\mathsf{ChiSq}}


\title{Bayesian inference in high-dimensional linear models using an empirical correlation-adaptive prior}
\author{C.~Liu,\footnote{Department of Statistics, North Carolina State University; {\tt cliu22@ncsu.edu}, {\tt yyang44@ncsu.edu}, {\tt rgmarti3@ncsu.edu}} \; Y.~Yang,$^*$ \; H.~Bondell,\footnote{{School of Mathematics and Statistics}, University of Melbourne; {\tt howard.bondell@unimelb.edu.au}} \; and R.~Martin$^*$} 
\date{\today}

\begin{document}

\maketitle 


\begin{abstract}
In the context of a high-dimensional linear regression model, we propose the use of an empirical 
correlation-adaptive prior that makes use of information in the observed predictor variable matrix to adaptively address high collinearity, determining if parameters associated with correlated predictors should be shrunk together or kept apart. Under suitable conditions, we prove that this empirical Bayes posterior concentrates around the true sparse parameter at the optimal rate asymptotically.  A simplified version of a shotgun stochastic search algorithm is employed to implement the variable selection procedure, and we show, via simulation experiments across different settings and a real-data application, the favorable performance of the proposed method compared to existing methods.  

\smallskip

\emph{Keywords and phrases:} Collinearity; empirical Bayes; posterior convergence rate; stochastic search; variable selection.
\end{abstract}

\section{Introduction}
\label{s:intro}

Consider the standard linear regression model 
\[ Y = X\beta + \eps, \]
where $Y$ is a $n \times 1$ vector of response variables, $X$ is a $n \times p$ matrix of predictor variables, $\beta$ is a $p \times 1$ vector of regression coefficients, and $\eps$ is a vector of iid $\nm(0,\sigma^2)$  errors.  We are interested in the high-dimensional case, where $p\gg n$. Furthermore, it is assumed that the  true $\beta$ is sparse, i.e., only a small subset of the $\beta$ coefficients are nonzero. 


There are a variety of methods available for estimating $\beta$ under a sparsity constraint.  This includes regularization-based methods like the lasso \citep{c2}, the adaptive lasso \citep{c3}, the SCAD \citep{c4}, and MCP \citep{c5}; see \citet{c1} for a review.  From a Bayesian point of view, varieties of priors for regression coefficients and the model space have been developed, leading to promising selection properties. For the regression coefficients, $\beta$, the normal mixture prior is specified in \citet{c6}; \citet{c9} introduce empirical Bayes ideas; \citet{c7} use spike-and-slab priors; \citet{c10} estimate $\beta$ as the ``most sparse'' among those in a suitable posterior credible region; \citet{polson.scott.2012} consider a horseshoe prior; \citet{c11} use shrinking and diffusing priors; and \cite{c12} consider an empirical Bayes version of spike-and-slab. 


Collinearity is unavoidable in high-dimensional settings, and 
methods such as lasso tend to smooth away the regression coefficients of highly collinear predictors, and hence deter correlated covariates to be included in the model simultaneously.  This motivated \citet{c15} to propose an adaptive powered correlation prior that lets the data itself weigh in on how collinear predictors are to be handled.  However, their suggested generalized Zellner's prior is not applicable in the $p > n$ scenario.  To overcome this, we use an empirical Bayes approach, and propose an {\em empirical correlation-adaptive prior} (ECAP) to this framework.  In Section~\ref{s:model}, we present our empirical Bayes model and a motivating example illustrating the effect of correlation-adaptation in the prior.  Asymptotic posterior concentration properties are derived in Section~\ref{s:inf}, in particular, adaptive minimax rates are established for estimation of the mean response.  In Section~\ref{s:comp}, we recommend a shotgun stochastic search approach for computation of the posterior distribution over the model space.  Simulation experiments are presented in Section~\ref{s:sim}, and we demonstrate the benefits of adaptively varying the correlation structure in the prior for variable selection compared to existing methods.  The real-data illustration in Section~\ref{s:real} highlights the improved predictive performance that can be achieved using the proposed correlation-adaptive prior.  Proofs of the theorems are deferred to the Appendix.

\section{Model specification}
\label{s:model}

\subsection{The prior}
\label{SS:prior}

Under assumed sparsity, it is natural to decompose $\beta$ as $(S, \beta_S)$, where $S \subseteq \{1,2,\ldots,p\}$ is the set of non-zero coefficients, called the {\em configuration} of $\beta$, and $\beta_S$ is the $|S|$-vector of non-zero values, with $|S|$ denoting the cardinality of $S$.  We will write $X_S$ for the sub-matrix of $X$ corresponding to the configuration $S$.  With this decomposition of $\beta$, a hierarchical prior is convenient, i.e., a prior for $S$ and a conditional prior for $\beta_S$, given $S$.  

First, for the prior $\pi(S)$ for $S$, we follow \citet{c12} and write 
\[ \pi(S)=\pi(S \mid |S|=s)f_n(s), \]
where $f_n(s)$ is a prior on $|S|$ and $\pi(S \mid |S|=s)$ is a conditional prior on $S$, given $|S|$.  Based on the recommendation in \citet{c13}, we take 
\begin{align}
\label{eq:fns}
    f_n(s) \propto c^{-s}p^{-as}, s=0,1,\ldots,R,
\end{align}  
where $a$ and $c$ are positive constants, and $R=\text{rank}(X) \leq n$.  It is common to take $\pi(S \mid |S|=s)$ to be uniform, but here we break from this trend to take collinearity into account.  Let $D(S) = |X_S^\top X_S|$ denote the determinant of $X_S^\top X_S$, and consider the geometric mean of the eigenvalues, $D(S)^{1/|S|}$, as a measure of the ``degree of collinearity'' in model $S$.  We set 
\begin{equation}
\label{eq:prior.S.given.size}
\pi_\lambda(S \mid |S|=s) = \frac{D(S)^{-\lambda/(2s)}1\{\kappa(S)<Cp^r\}}{\sum_{S:|S|=s}D(S)^{-\lambda/(2s)}1\{\kappa(S)<Cp^r\}}, \quad \lambda \in \RR, 
\end{equation}
where $\kappa(S)$ is the condition number of $X_S^\top X_S$, and $r$ and $C$ are positive constants specified to exclude models with extremely ill-conditioned $X_S^\top X_S$.  The constant $\lambda$ is an important feature of the proposed model, and will be discussed in more detail below.  Because of the dependence on $\lambda$ above, we will henceforth write $\pi_\lambda(S)$ for the prior of $S$.  

Second, as the prior for $\beta_S$, given $S$, we take 
\begin{equation}
\label{eq:prior.beta}
  (\beta_S \mid S,\lambda) \sim \nm \big(\phi\hat{\beta}_S, \sigma^2 g k_S (X_S^\top X_S)^{\lambda}\big).   
\end{equation}
Here $\hat{\beta}_S$ is the least squares estimator corresponding to configuration $S$ and design matrix $X_S$, $\phi \in (0,1)$ is a shrinkage factor to be specified, $g$ is a parameter controlling the prior spread, $(X_S^\top X_S)^{\lambda}$ is an adaptive powered correlation matrix, and 
\[ k_S=\tr\{(X_{S}^\top X_{S})^{-1}\} \big/ \tr\{(X_{S}^\top X_{S}\big)^{\lambda}\} 
\]
is a standardizing factor as in \citet{c15} designed to control for the scale corresponding to different values of $\lambda$.  This data-driven prior centering is advantageous because we can use a conjugate normal prior, which has computational benefits, but without sacrificing on theoretical posterior convergence rate properties \citep{c12}.  Let $\pi_\lambda(\beta_S \mid S)$ denote this prior density for $\beta_S$, given $S$.  

The power parameter $\lambda$ on the prior covariance matrix can encourage or discourage the inclusion of correlated predictors. When $\lambda > 0$, the prior shrinks the coefficients of correlated predictors towards each other; when $\lambda < 0$, they tend to be kept apart, with $\lambda=-1$ being the most familiar; and, finally, $\lambda=0$ implies prior independence.  Therefore, positive $\lambda$ would prefer larger models by capturing as many correlated predictors as possible, while negative $\lambda$ tends to select models with less collinearity; see \citet{c15} for additional discussion of this phenomenon.  Our data-driven choice of $\lambda$, along with that of the other tuning parameters introduced here and in the next subsection, will be discussed in Section~\ref{ss:tuning}.


\subsection{The posterior distribution}

For this standard linear regression model, the likelihood function is 
\[ L_n(\beta) = (2\pi \sigma^2)^{-n/2} e^{-\frac{1}{2\sigma^2} \|Y - X\beta\|^2}, \quad \beta \in \RR^p. \]
It would be straightforward to include $\sigma^2$ as an argument in this likelihood function, introduce a prior for $\sigma^2$, and carry out a full Bayesian analysis.  However, our intention is to use a plug-in estimator for $\sigma^2$ in what follows and, hence, we do not include the error variance as an argument in our likelihood function.   

Given a prior and a likelihood, we can combine the two via Bayes's formula to obtain a posterior distribution for $(S, \beta_S)$ or, equivalently, for the coefficient vector $\beta$.  To put a minor twist on this standard formulation, we follow a recent trend \citep[e.g.,][]{c18,grunwald.ommen.scaling, syring.martin.scaling} and use a power likelihood instead of the likelihood itself.  That is, our posterior for $(S,\beta_S)$ is defined as 
\[ \pi_\lambda^n(S, \beta_S) \propto L_n(\beta_{S+})^\alpha \pi_\lambda(\beta_S \mid S) \pi_\lambda(S), \]
where $\beta_{S+}$ is the $p$-vector obtained by filling in around $\beta_S$ with zeros in the entries corresponding to $S^c$, and $\alpha \in (0,1)$ is a regularization factor.   Then the posterior distribution for $\beta$, denoted by $\Pi_\lambda^n$, is obtained by summing over all configurations $S$, i.e.,
$$
\Pi_\lambda^n(A) \propto \int_{\{\beta_S: \beta_{S+} \in A\}} \sum_S \pi_\lambda^n(S, \beta_S) \, d\beta_S, \quad A \subseteq \RR^p.
$$
Since one of our primary objectives is variable selection, it is of interest that we can obtain a closed-form expression for the posterior distribution of $S$, up to a normalizing constant, a result of our use of a conjugate normal prior for $\beta_S$, given $S$.  That is, we can integrate out $\beta_S$ above to get a marginal likelihood for $Y$, i.e., 
\begin{align}
m_\lambda(Y \mid S) & =(2\pi\sigma^2)^{-n/2}\prod_{i=1}^s\bigl(1+\alpha g k_S d_{S,i}^{\lambda+1}\bigr)^{-1/2} \notag \\
& \qquad \times \exp\Bigl[-\frac{\alpha}{2\sigma^2} \Bigl\{ \|y-\hat{y}_S\|^2 +(1-\phi)^2 \sum_{i=1}^s\frac{d_{S,i}}{1+\alpha g k_S d_{S,i}^{\lambda+1}}\theta_{S,i}^2 \Bigr\} \Bigr], \label{eq:marginal}
\end{align}
where $d_{S,i}$ is the $i^\text{th}$ eigenvalue of $X_S^\top X_S$; also, $\Gamma_S \Lambda_S \Gamma_S^\top$ is the spectral decomposition of $X_S^\top X_S$, with $\Lambda_S=\text{diag}(d_{S,1},\ldots,d_{S,s})$, and  $\theta_{S,i}$ is the $i^\text{th}$ element of $\theta_S=\Gamma_S^\top \hat{\beta}_S$.  Then it is straightforward to get the posterior distribution for $S$:
\begin{equation}
\label{eq:marginal.S}
\pi_\lambda^n(S) \propto m_\lambda(Y \mid S) \, \pi_\lambda(S). 
\end{equation} 
The variable selection method described in Section~\ref{s:comp} and illustrated in Sections~\ref{s:sim}--\ref{s:real} is based on this posterior distribution.

\subsection{A motivating example}
\label{ss:illustration}

We now give a simple example to illustrate the effects of incorporating $\lambda$ into \eqref{eq:prior.S.given.size} and \eqref{eq:prior.beta}.  Consider a case with $n=p=5$ and let $X=X_{n\times p}$ have iid rows, each with a standard multivariate normal with first-order autoregressive dependence and correlation parameter $\rho$.  Given $X$, the conditional distribution of the response is determined by the linear model 
\[ y_i=x_{i1}+0.8x_{i2}+\eps_i, \quad \text{where} \quad \eps_1,\ldots,\eps_5 \iid \N(0, 1). \]
The black, blue, and red curves in Figure~\ref{fig:illustration} represent $\lambda \mapsto \log \pi_\lambda^n(S)$, for three different $S$ configurations, namely, the true configuration $S^\star=\{1, 2\}$, $S^-=\{1\}$, and $S^+=\{1, 2, 3\}$, respectively.  Panel~(a) corresponds to a high correlation case, $\rho = 0.8$, we see that the ECAP-based posterior would prefer $S^\star$ for suitably large $\lambda$; compare this to the choice $\lambda \equiv -1$ in \citet{c12} which would prefer the smaller configuration $S^-$.  On the other hand, when the correlation is relatively low, as in Panel~(b), we see that large positive $\lambda$ would encourage larger configuration while the true configuration would be preferred for sufficiently large negative values of $\lambda$.  The take-away message is that, by allowing $\lambda$ to vary, the ECAP-based model has the ability to adjust to the correlation structure, which can be beneficial in identifying the relevant variables.  


\begin{figure}
\centering
    \begin{subfigure}[b]{0.5\textwidth}
                \centering
                \includegraphics[width=0.9\textwidth]{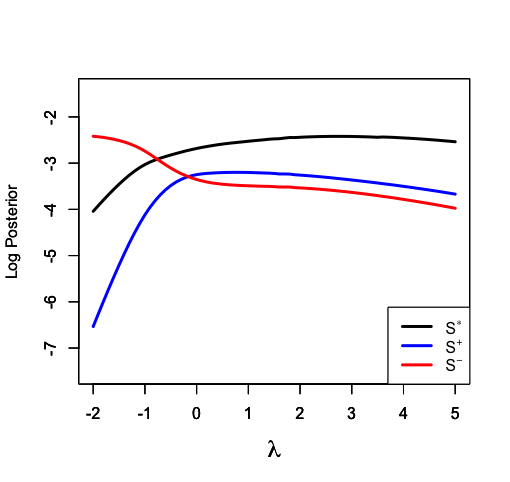}
                \caption{$\rho=0.8$}
                \label{fig:rho0.8}
    \end{subfigure}%
    \begin{subfigure}[b]{0.5\textwidth}
                    \centering
                \includegraphics[width=0.9\textwidth]{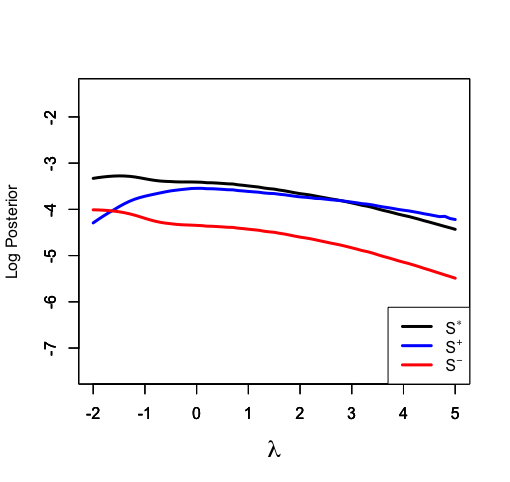}
                \caption{$\rho=0.1$}
                \label{fig:rho0.1}
    \end{subfigure}%
    \caption{Plot of $\lambda \mapsto \log\pi_\lambda^n(S)$ for three different $S$ and two different $\rho$.}
\label{fig:illustration}
\end{figure}

\section{Posterior convergence  properties}
\label{s:inf}

\subsection{Setup and assumptions}

We will stick with the standard notation given previously, but it will help to keep in mind that $Y^n=(Y_1^n, Y_2^n, ..., Y_n^n)$ and $X^n=((X_{ij}^n))$ are better understood as triangular arrays.  Therefore, we can have $p$, $s^\star$, and $R$ all depend on $n$.  We assume throughout that $s^{\star} \le R \le n \ll p$; more precise conditions given below.  We also assume that $\lambda$, $g$, and $\sigma^2$ are fixed constants in this setting, not parameters to be estimated/tuned.  Therefore, to simplify the notation here and in the proofs, we will drop the subscript $\lambda$ and simply write $\Pi^n$ for the posterior for $\beta$ instead of $\Pi_\lambda^n$.

For the fundamental problem of estimating the mean response, the minimax rate does not depend on the correlation structure in $X$, so we cannot expect any improvements in the rate by incorporating this correlation structure in our prior distribution.  
Therefore, our goal here is simply to show that the minimax rates can still be achieved while leaving room to adjust for collinearity in finite-samples.  The finite-sample benefits of the correlation-adaptive prior will be seen in the numerical results presented in Section~\ref{s:sim}.   

We start by stating the basic assumptions for all the results that follow, beginning with two simple-to-state assumptions about the asymptotic regime.  In particular, relative to $n$, the true configuration is not too complex. 

\begin{asmp}
\label{asmp:1}
$s^\star=o(n)$.
\end{asmp}



The next assumption puts a very mild size condition on the non-zero regression coefficients as well as on the user-specified shrinkage factor $\phi=\phi_n$ in the prior.  


\begin{asmp}
\label{asmp:2}
The true $\beta^\star$ satisfies $s^\star \{n \|\beta_{S^\star}^\star\|\}^{-1} \to 0$ and, moreover, the factor $\phi_n \in (0,1)$ satisfies $1-\phi_n = O(s^\star \{n \|\beta_{S^\star}^\star\|\}^{-1})$.  
\end{asmp}

The condition on $\beta^\star$ above basically says that $\min_{j \in S^\star} |\beta_j^\star| > s^{\star 1/2} n^{-1}$, which is typically much weaker than the usual {\em beta-min} condition---see \eqref{eq:beta.min} in Section~\ref{SS:selection}---that is required for variable selection consistency.  Note, however, that one would need to know $\|\beta_{S^\star}^\star\|$ to check this assumption about $\phi_n$, and we present a data-driven choice of $\phi_n$ that satisfies this condition in Section~\ref{SSS:phi}.  

Finally, we need to make some assumptions on the $n \times p$ design matrix $X$.  For a given configuration $S$, let $\lambda_{\min}(S)$ and $\lambda_{\max}(S)$ denote the smallest and the largest eigenvalues of $n^{-1} X_S^\top X_S$, respectively.  Next, define 
\[ \ell(s) = \min_{S: |S|=s} \lambda_{\min}(S) \quad \text{and} \quad u(s) = \max_{S: |S|=s} \lambda_{\max}(S). \]
Recall that these depend (implicitly) on $n$ because of the triangular array formulation.  It is also clear that $\ell(s)$ and $u(s)$ are non-increasing and non-decreasing functions of the complexity $s$, respectively.  If $\kappa(S) = \lambda_{\max}(S)/\lambda_{\min}(S)$ is the condition number of $n^{-1} X_S^\top X_S$, then we can define 
\[ \omega(s) = \max_{S: |S|=s} \kappa(S), \]
and get the relation $\omega(s) \leq u(s)/\ell(s)$.  

\begin{asmp}
\label{asmp:3}
$0 < \liminf_n \ell(s^\star) < \limsup_n u(s^\star) < \infty$.  
\end{asmp}

This assumption says, roughly, that every submatrix $X_{S}$, for $|S| \leq s^\star$, is full rank, and is implied by, for example, the sparse Riesz condition of order $s^\star$ in \citet{c22}.  More general results can surely be derived, i.e., with weaker conditions on $X$, but at the expense of more complicated statements and assumptions.

\subsection{Rates under prediction error loss}
\label{ss:prediction}

Define the set 
\begin{equation}
\label{eq:B.set}
B_\eps = \{\beta \in \RR^p: \|X\beta - X\beta^\star\|^2 > \eps\}, \quad \eps > 0, 
\end{equation}
which contains those $\beta$ that would do a relatively poor job of estimating the mean $X\beta^\star$ or, equivalently, of predicting the response $Y$.  We, therefore, hope that the posterior $\Pi_\lambda^n$ assigns asymptotically negligible mass there.  The following theorem makes this precise.  Recall the definitions of the prior and, in particular, the quantities $a$ and $r$.    

\begin{thm}
\label{thm:prediction.rate}
Under Assumptions~\ref{asmp:1}--\ref{asmp:3}, there exists a constant $M$ such that 
\[ \sup \E_{\beta^\star}\{\Pi^n(B_{M\eps_n})\} \to 0, \quad n \to \infty, \]
where the supremum is over all $\beta^\star$ such that $|S_{\beta^\star}|=s^\star$, 
\[ \eps_n = \max\{q(R, \lambda, r, a), s^\star \log(p/s^\star)\}, \]
and 
\[ q(R, \lambda, r, a) = \begin{cases} R\{r(1+\lambda) - a\}\log p & \text{if $\lambda \in [0,\infty)$} \\ R (r - a) \log p & \text{if $\lambda \in [-1,0)$} \\ R(-r\lambda - a) \log p & \text{if $\lambda \in (-\infty,-1)$}. \end{cases} \]
\end{thm}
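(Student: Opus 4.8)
\emph{Proof plan.} I would run the standard evidence-ratio argument. After dividing the posterior through by $L_n(\beta^\star)^\alpha$, write $\Pi^n(B_{M\eps_n}) = N_n/D_n$, where
\[
N_n = \sum_S \pi(S)\int_{\beta_{S+}\in B_{M\eps_n}} e^{-\frac{\alpha}{2\sigma^2}R_S(\beta_S)}\,\pi(\beta_S\mid S)\, d\beta_S, \qquad D_n = \sum_S \pi(S)\,\tilde m(S),
\]
$R_S(\beta_S) = \|X\beta_{S+}-X\beta^\star\|^2 - 2\eps^\top X(\beta_{S+}-\beta^\star)$, and $\tilde m(S) = m(Y\mid S)/L_n(\beta^\star)^\alpha$ is the right-hand side of \eqref{eq:marginal} with $\|Y-X\beta^\star\|^2$ subtracted inside the exponential. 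It then suffices to produce a high-probability lower bound on $D_n$ and a matching upper bound on $N_n$ that is small on $B_{M\eps_n}$.

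For $D_n$ I would keep only the $S=S^\star$ term: $D_n\ge \pi(S^\star)\,\tilde m(S^\star)$. By Assumption~\ref{asmp:3} the eigenvalues of $X_{S^\star}^\top X_{S^\star}$ are of exact order $n$, so $k_{S^\star}d_{S^\star,i}^{\lambda+1}\asymp 1$ and the determinant factor of $\tilde m(S^\star)$ is $e^{-O(s^\star)}$; moreover $\|Y-\hat Y_{S^\star}\|^2-\|Y-X\beta^\star\|^2=-\|P_{S^\star}\eps\|^2\le 0$, where $P_{S^\star}$ is the projection onto the span of $X_{S^\star}$ (which contains $X\beta^\star$), and the centering term is at most $(1-\phi_n)^2\|\hat Y_{S^\star}\|^2=(1-\phi_n)^2\,O(n\|\beta_{S^\star}^\star\|^2+\|P_{S^\star}\eps\|^2)=o(\eps_n)$ by Assumptions~\ref{asmp:1}--\ref{asmp:2} and a $\chi^2_{s^\star}$ tail bound; hence $\tilde m(S^\star)\ge e^{-o(\eps_n)}$ with probability tending to one. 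For the prior, $f_n(s^\star)\ge\tfrac12(cp^a)^{-s^\star}$ since the normalizing sum in \eqref{eq:fns} is bounded, and the cap $\kappa(\cdot)<Cp^r$ in \eqref{eq:prior.S.given.size} confines the ratio of any two admissible values of $D(\cdot)^{-\lambda/(2s^\star)}$ to a factor $p^{O(r|\lambda|)}$, so $\pi(S^\star\mid|S^\star|=s^\star)\ge\binom{p}{s^\star}^{-1}p^{-O(r|\lambda|)}$ and $-\log\pi(S^\star)=O(s^\star\log p)$. Altogether $D_n\ge e^{-C_1\eps_n}$ with probability tending to one, uniformly over admissible $\beta^\star$.

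For $N_n$ I would exploit the defining constraint. On $B_{M\eps_n}$, $\|X\beta_{S+}-X\beta^\star\|^2>M\eps_n$, and since $X\beta_{S+}-X\beta^\star$ lies in the span of $X_{S\cup S^\star}$, $R_S(\beta_S)\ge\tfrac12\|X\beta_{S+}-X\beta^\star\|^2-2\|P_{S\cup S^\star}\eps\|^2>\tfrac12 M\eps_n-2\|P_{S\cup S^\star}\eps\|^2$. A $\chi^2$ union bound over configurations of size $\le R+s^\star$ gives $\|P_{S\cup S^\star}\eps\|^2\lesssim\sigma^2(|S|+s^\star)\log p$ for all $S$ simultaneously, again with probability tending to one; on that event $N_n\le e^{-\frac{\alpha}{4\sigma^2}M\eps_n}\sum_S\pi(S)\,e^{O((|S|+s^\star)\log p)}$, which by \eqref{eq:fns} is $e^{-\frac{\alpha}{4\sigma^2}M\eps_n+O(s^\star\log p)}$ once $\eps_n$ dominates a suitable multiple of $R\log p$---this is where $q$ enters. (Alternatively one may integrate out the data-dependent Gaussian prior $\pi(\beta_S\mid S)$ and invoke $\E_{\beta^\star}\{e^{-\frac{\alpha}{2\sigma^2}R_S(\beta_S)}\}=e^{-\frac{\alpha(1-\alpha)}{2\sigma^2}\|X\beta_{S+}-X\beta^\star\|^2}$, valid since $\alpha<1$; that integration reinstates exactly the powered-correlation determinant of \eqref{eq:marginal}.) On the intersection of the two good events $\Pi^n(B_{M\eps_n})=N_n/D_n\le e^{-(cM-C_1)\eps_n+O(s^\star\log p)}$; taking expectations, using $\Pi^n\le 1$ off that event, and then taking $M$ large, gives $\sup\E_{\beta^\star}\{\Pi^n(B_{M\eps_n})\}\to 0$, since $\eps_n\ge s^\star\log(p/s^\star)\to\infty$.

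The crux is the bookkeeping that makes both bounds uniform and sharp over the \emph{entire} admissible model class---configurations as complex as $|S|=R$ and as ill-conditioned as $\kappa(S)=Cp^r$. The powered-correlation quantities $d_{S,i}^{\lambda+1}$ (in $\tilde m(S)$), $d_{S,i}^{\lambda}$ (in the prior covariance), $d_{S,i}^{-1}$ (inside $k_S$), and $D(S)^{-\lambda/(2|S|)}$ (in the prior on $S$) are each governed by the largest or by the smallest eigenvalue of $n^{-1}X_S^\top X_S$ according as $\lambda>0$, $\lambda\in[-1,0)$, or $\lambda<-1$; carrying this trichotomy through against $\kappa(S)<Cp^r$, $|S|\le R$, and the prior decay $p^{-a|S|}$ is precisely what produces the three branches of $q(R,\lambda,r,a)$ and the stated form of $\eps_n$. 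A secondary technicality is keeping explicit control of the constants in the $\chi^2$- and Gaussian-tail union bounds, so the high-probability events can be taken uniform over the triangular array and over all $\beta^\star$ with $|S_{\beta^\star}|=s^\star$.
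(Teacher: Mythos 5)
Your denominator bound is essentially the paper's Lemma~\ref{lem:denominator} (restrict to the $S^\star$ term, use Assumption~\ref{asmp:3} for the determinant factor, the projection identity for the residual sum of squares, and Assumptions~\ref{asmp:1}--\ref{asmp:2} to kill the $(1-\phi_n)^2$ centering term), and that part is fine. The gap is in the numerator. In your primary, event-based route the data-dependent Gaussian prior $\pi_\lambda(\beta_S\mid S)$ simply integrates to one and the conditional prior $\pi_\lambda(S\mid |S|=s)$ sums to one over $\{S:|S|=s\}$, so after the $\chi^2$ union bound the only surviving complexity penalty is $\sum_s f_n(s)\,p^{c_0(s+s^\star)}$ with $c_0$ an absolute constant determined by the tail bound and the Cauchy--Schwarz split $R_S\ge(1-\delta)\|X\beta_{S+}-X\beta^\star\|^2-\delta^{-1}\|P_{S\cup S^\star}\eps\|^2$. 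Neither $\lambda$ nor $r$ ever enters this bound, so the route cannot produce the stated $q(R,\lambda,r,a)$; it proves concentration at a rate of the form $\max\{R(c_0-a)\log p,\ s^\star\log(p/s^\star)\}$ instead. That is a genuinely different statement: when $r\max\{1+\lambda,1-\lambda\}<a<c_0$ the theorem gives the minimax rate $s^\star\log(p/s^\star)$ while your bound does not, so the corollary emphasized after the theorem ($a>r\max\{1+\lambda,1-\lambda\}$ suffices for minimax optimality) is lost. Your closing paragraph asserts that the trichotomy in $q$ falls out of the bookkeeping, but your primary argument has no mechanism through which the powered-correlation quantities reach the numerator at all.

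The place where $\lambda$ and $r$ actually enter is the step you relegate to a parenthesis, and it is not the routine computation you suggest. The prior center $\phi\hat\beta_S$ is a function of $Y$, so you cannot simultaneously ``integrate out the prior'' and ``invoke'' $\E_{\beta^\star}\{e^{-\frac{\alpha}{2\sigma^2}R_S(\beta_S)}\}=e^{-\frac{\alpha(1-\alpha)}{2\sigma^2}\|X\beta_{S+}-X\beta^\star\|^2}$: the expectation over $Y$ does not commute with integration against a $Y$-dependent density. The paper decouples the two with H\"older's inequality ($h>1$, $h\alpha<1$, $q=(h-1)/h$), applies the R\'enyi identity only to $J_n(\beta_S)=\E_{\beta^\star}[(L_n(\beta_{S+})/L_n(\beta^\star))^{h\alpha}]$, and must separately evaluate $K_n(\beta_S)=\E_{\beta^\star}[\N^q(\beta_S\mid\phi\hat\beta_S,\sigma^2gk_S(X_S^\top X_S)^\lambda)]$ via noncentral-$\chi^2$ moment generating functions; integrating the resulting bound over $\beta_S$ is what produces the factor $\psi(|S|)^{|S|}$ involving $\omega(s)^{(\lambda+1)s}$, $\omega(s)^{s}$, or $\omega(s)^{-\lambda s}$ according to the sign regime of $\lambda$, and only then does the constraint $\omega(s)\le Cp^r$ combined with $f_n(s)\propto c^{-s}p^{-as}$ and $|S|\le R$ yield the three branches of $q$. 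To complete your proof you would need to supply this H\"older/$K_n$ computation (Lemma~\ref{lem:numerator}) in place of, or in addition to, your union-bound step.
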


\begin{proof} 
See Appendix~\ref{SS:proof1}.  
\end{proof}

In the so-called ordinary high-dimensional regime \citep[e.g.,][]{rigollet2012sparse}, $s^\star \log(p / s^\star)$ is the minimax concentration rate.  So the take-away message here is, as long as the $(a,r)$ in \eqref{eq:fns} and \eqref{eq:prior.S.given.size} are chosen to satisfy
\[ a > r \max\{1+\lambda, 1-\lambda\}, \]
the ECAP posterior attains the minimax optimal rate.

\subsection{Effective posterior dimension}

Theorem~\ref{thm:prediction.rate} suggests that the posterior for $\beta$ concentrates near the true $\beta^\star$ in some sense.  However, since $\beta^\star$ is sparse, it would be beneficial, for the sake of parsimony, etc, if the posterior also concentrated on a roughly $s^\star$-dimensional subset of $\RR^p$.  This does not follow immediately from the prediction loss result but, the following theorem reveals that the effective dimension of the posterior $\Pi^n$ for $\beta$ is approximately $s^\star$.  


\begin{thm}
\label{thm:dimension}
Suppose $f_n(s)$ has $(a,r)$ that satisfy $a>r\max\{1+\lambda, 1, -\lambda\}$, and define 
\begin{equation}
\label{eq:rho0}
\rho_0 = \frac{a+1}{a-r\max\{1+\lambda, 1, -\lambda\}} > 1. 
\end{equation}
Then, under Assumptions~\ref{asmp:1}--\ref{asmp:3}, we have $\sup \E_{\beta^\star}\{\Pi^n(U_n)\} \to 0$ as $n \to \infty$, where the supremum is over all $s^\star$-sparse $\beta^\star$ and $U_n = \{\beta \in \RR^p : |S_{\beta}| \ge \rho s^\star\}$ for any $\rho > \rho_0$.  
\end{thm}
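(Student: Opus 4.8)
The plan is to bound the posterior odds against the true configuration $S^\star=\supp(\beta^\star)$. Since $U_n$ depends on $\beta$ only through $\supp(\beta)$, we have $\Pi^n(U_n)=\sum_{|S|\ge\rho s^\star}\pi^n(S)$ with $\pi^n(S)\propto m(Y\mid S)\,\pi(S)$ as in \eqref{eq:marginal.S}; write $s=|S|$, $\pi(S)=\pi(S\mid|S|=s)\,f_n(s)$, and abbreviate $m_\lambda=m$, $\pi_\lambda=\pi$ since $\lambda$ is fixed. Bounding the normalizing constant of $\pi^n$ from below by the single term indexed by $S^\star$, which is prior-admissible because Assumption~\ref{asmp:3} forces $\kappa(S^\star)=O(1)<Cp^r$, gives
\[
\Pi^n(U_n)\ \le\ \sum_{s\ge\rho s^\star}\ \frac{f_n(s)}{f_n(s^\star)}\cdot\frac{1}{\pi(S^\star\mid|S^\star|=s^\star)}\cdot\sum_{|S|=s}\frac{m(Y\mid S)}{m(Y\mid S^\star)}\,\pi(S\mid|S|=s).
\]
Organising the sum this way is deliberate: $\sum_{|S|=s}\pi(S\mid|S|=s)=1$, so the exponentially many configurations of a fixed size carry no free $\binom{p}{s}$ factor; $f_n(s)/f_n(s^\star)=c^{-(s-s^\star)}p^{-a(s-s^\star)}$ up to the bounded ratio of normalizers of $f_n$; and since the normalizer of $\pi(\cdot\mid|S|=s^\star)$ has at most $\binom{p}{s^\star}$ terms while Assumption~\ref{asmp:3} keeps $D(S^\star)^{1/s^\star}$ of order $n$, we have $\pi(S^\star\mid|S^\star|=s^\star)\gtrsim p^{-(1+o(1))s^\star}$.

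Next I would fix a deterministic event $\mathcal{E}_n$ with $\prob_{\beta^\star}(\mathcal{E}_n)\to1$, on which $\max_{1\le t\le 2R}\max_{|T|=t}\|P_T\eps\|^2\le C\,t\log p$, where $P_S$ is the orthogonal projection onto the column space of $X_S$; this follows from a $\chi^2$ tail bound and a union bound over the at most $p^{2R+1}$ relevant sub-configurations. On $\mathcal{E}_n$, for every $S$ with $|S|=s\ge s^\star$, monotonicity of the residual sum of squares under enlargement, together with $X\beta^\star\in\mathrm{col}(X_{S\cup S^\star})$, gives $\|Y-\hat y_S\|^2\ge\|Y-\hat y_{S\cup S^\star}\|^2=\|(I-P_{S\cup S^\star})\eps\|^2\ge\|\eps\|^2-2C\,s\log p$, while $\|Y-\hat y_{S^\star}\|^2=\|(I-P_{S^\star})\eps\|^2\le\|\eps\|^2$; hence $\|Y-\hat y_{S^\star}\|^2-\|Y-\hat y_S\|^2\le 2C\,s\log p$, the $\|\eps\|^2$ cancelling. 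The remaining term in the exponent of \eqref{eq:marginal} for $S^\star$ is handled by Assumption~\ref{asmp:2}: on $\mathcal{E}_n$, $\sum_i d_{S^\star,i}\theta_{S^\star,i}^2=\|P_{S^\star}Y\|^2\le 2\|X\beta^\star\|^2+2\|P_{S^\star}\eps\|^2\lesssim n\|\beta_{S^\star}^\star\|^2+s^\star\log p$, so $(1-\phi_n)^2\sum_i \frac{d_{S^\star,i}}{1+\alpha g k_{S^\star}d_{S^\star,i}^{\lambda+1}}\theta_{S^\star,i}^2\lesssim \frac{s^{\star2}}{n}+s^\star\log p\,\bigl(\tfrac{s^\star}{n\|\beta_{S^\star}^\star\|}\bigr)^2=o(s\log p)$; and $\prod_{i\le s^\star}(1+\alpha g k_{S^\star}d_{S^\star,i}^{\lambda+1})^{-1/2}\ge e^{-O(s^\star)}$ since Assumption~\ref{asmp:3} makes $d_{S^\star,i}$ and $k_{S^\star}d_{S^\star,i}^{\lambda+1}$ of order one. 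Together these show that, on $\mathcal{E}_n$, the exponential and product factors of $m(Y\mid S)$ are within $e^{O(s\log p)}$ of those of $m(Y\mid S^\star)$, uniformly in $S$ with $|S|=s$.

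The remaining step, bounding $\sum_{|S|=s}\frac{m(Y\mid S)}{m(Y\mid S^\star)}\pi(S\mid|S|=s)$ on $\mathcal{E}_n$, is then exactly the estimate already carried out in the proof of Theorem~\ref{thm:prediction.rate} in Appendix~\ref{SS:proof1} — using $\prod_{i\le s}(1+\alpha g k_S d_{S,i}^{\lambda+1})^{-1/2}\le1$ in the numerator, the shape $\pi(S\mid|S|=s)\propto D(S)^{-\lambda/(2s)}1\{\kappa(S)<Cp^r\}$ of the conditional prior, and the condition-number restriction to pin $D(S)^{1/s}$ between orders $np^{-r}$ and $n$ — the only difference being that here one keeps the dependence on $s$ explicit rather than maximising over $s\le R$. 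This yields, on $\mathcal{E}_n$,
\[
\frac{f_n(s)}{f_n(s^\star)}\cdot\frac{1}{\pi(S^\star\mid|S^\star|=s^\star)}\cdot\sum_{|S|=s}\frac{m(Y\mid S)}{m(Y\mid S^\star)}\,\pi(S\mid|S|=s)\ \le\ \exp\bigl\{[\,r\max\{1+\lambda,1,-\lambda\}-a+o(1)\,]\,s\log p+(a+1)\,s^\star\log p\bigr\}.
\]
Since $(a-r\max\{1+\lambda,1,-\lambda\})\,\rho_0=a+1$ by \eqref{eq:rho0}, the exponent equals $-(a-r\max\{1+\lambda,1,-\lambda\})(s-\rho_0 s^\star)\log p+o(s\log p)$, which for any fixed $\rho>\rho_0$ is at most $-c\,s\log p$ for some $c=c(\rho)>0$ and all $s\ge\rho s^\star$ once $n$ is large; summing the geometric series, $\sum_{s\ge\rho s^\star}p^{-cs}\le p^{-c\rho s^\star}/(1-p^{-c})\to0$, so $\Pi^n(U_n)\to0$ on $\mathcal{E}_n$. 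Adding $\prob_{\beta^\star}(\mathcal{E}_n^c)\to0$, and noting that every constant above depends on $\beta^\star$ only through $s^\star$, through $\|\beta_{S^\star}^\star\|$ (Assumption~\ref{asmp:2}) and through the eigenvalues of $n^{-1}X_{S^\star}^\top X_{S^\star}$ (Assumption~\ref{asmp:3}), gives $\sup\E_{\beta^\star}\{\Pi^n(U_n)\}\to0$, as claimed.

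The step I expect to be the main obstacle is the displayed per-size bound: it requires controlling $m(Y\mid S)$ simultaneously over all configurations of each size $s\le R$ — in particular over ``wrong but highly collinear'' $S\not\supseteq S^\star$, which can fit almost as well as $S^\star$ and, when $\lambda>0$, are moreover up-weighted by the factor $D(S)^{-\lambda/(2s)}$ — while balancing this against the determinant powers, the standardized shrinkage factors $k_S$, and the condition-number restriction, and tracking the constants tightly enough that the threshold comes out to exactly $\rho_0$ rather than some $\rho_0'\ge\rho_0$. Since $\rho_0$ has no slack this is genuinely delicate, but it is essentially the computation already performed for Theorem~\ref{thm:prediction.rate}, so it can be imported, leaving only the routine reorganisation into a linear-in-$s$ exponent and the summation above.
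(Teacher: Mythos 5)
Your skeleton is a genuinely different route from the paper's: you work pathwise with the marginal likelihoods $m(Y\mid S)$ on a high-probability event $\mathcal{E}_n$ controlling $\max_{|T|\le 2R}\|P_T\eps\|^2$, whereas the paper never bounds the numerator pathwise at all --- it lower-bounds the denominator on a high-probability event (Lemma~\ref{lem:denominator}) and then bounds the \emph{expectation} of the numerator term by term via H\"older's inequality and the R\'enyi identity $\E_{\beta^\star}^{1/h}[(L_n(\beta_{S+})/L_n(\beta^\star))^{h\alpha}]=\exp\{-\tfrac{\alpha(1-h\alpha)}{2\sigma^2}\|X\beta_{S+}-X\beta^\star\|^2\}\le 1$ (Lemma~\ref{lem:numerator} with $J_n\le 1$). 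That distinction is not cosmetic, and it is where your argument breaks.

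The gap is in the displayed per-size bound. Your own intermediate conclusion is that, on $\mathcal{E}_n$, $m(Y\mid S)/m(Y\mid S^\star)\le e^{cs^\star}\exp\{\tfrac{\alpha}{2\sigma^2}(2Cs\log p+o(s\log p))\}$, i.e.\ the likelihood-ratio factor is $e^{O(s\log p)}$ with a \emph{fixed positive constant} $c_1=\alpha C/\sigma^2$ in the exponent; yet the final display asserts the coefficient of $s\log p$ is $r\max\{1+\lambda,1,-\lambda\}-a+o(1)$, with no room for $c_1$. This constant cannot be made $o(1)$: for any event of probability tending to one, $\max_{|S|=s}\|(P_{S\cup S^\star}-P_{S^\star})\eps\|^2$ genuinely concentrates near $2\sigma^2 s\log(p/s)$ (it is the maximum of $\binom{p}{s}$ chi-squares), so $\max_{|S|=s}m(Y\mid S)/m(Y\mid S^\star)\gtrsim e^{(\alpha-o(1))s\log(p/s)}$ on $\mathcal{E}_n$, and your uniform-over-$S$ bound necessarily pays this. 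The consequence is that your argument, made rigorous, proves the theorem only under the stronger hypothesis $a>r\max\{1+\lambda,1,-\lambda\}+c_1$ and with the inflated threshold $(a+1)/(a-r\max\{1+\lambda,1,-\lambda\}-c_1)$ in place of $\rho_0$ --- a strictly weaker statement than the one claimed. Your proposal to ``import'' the Theorem~\ref{thm:prediction.rate} computation does not repair this, because that computation (Lemma~\ref{lem:numerator}) is an in-expectation bound on $\sum_S\pi(S)\E_{\beta^\star}[\cdot]$, where linearity of expectation replaces the union bound and each configuration is handled individually with $J_n(\beta_S)\le 1$; it cannot be spliced into a pathwise bound on $\mathcal{E}_n$. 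To recover the exact $\rho_0$ you should abandon the event-based control of the numerator and follow the paper's split $\E\{\Pi^n(U_n)\}\le\E\{N_n(U_n)\}/(\text{denominator bound})+\prob(D_n<\text{denominator bound})$, keeping the high-probability event only for the denominator. The remaining pieces of your write-up (the lower bound on $\pi(S^\star\mid|S^\star|=s^\star)$, the treatment of the $(1-\phi_n)^2$ term, the geometric summation over $s\ge\rho s^\star$) are fine.
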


\begin{proof}
See Appendix~\ref{SS:proof2}.  
\end{proof}

Aside from the economical benefits of having an effectively low-dimensional posterior distribution, Theorem~\ref{thm:dimension} is useful in the proofs of all the remaining results.  


\subsection{Rates under estimation error loss}

Define the set 
\begin{align}
\label{eq:V.set}
V_\delta = \{\beta \in \RR^p: \|\beta-\beta^\star\|^2 > \delta\}, \quad \delta > 0.
\end{align}  
Just like in Section~\ref{ss:prediction}, we hope that, for suitable $\delta$, the posterior $\Pi^n$ will assign asymptotically negligible mass to this set of undesirable $\beta$s.   

\begin{thm}
\label{thm:estimation.rate}
Suppose $f_n(s)$ has $(a,r)$ that satisfy $a>r\max\{1+\lambda, 1-\lambda\}$.  Under Assumptions~\ref{asmp:1}--\ref{asmp:3}, there exists a constant $M >0$ such that $\sup \E_{\beta^\star}\{\Pi^n(V_{M\delta_n})\} \to 0$ as $n \to \infty$, where the supremum is over all $s^\star$-sparse $\beta^\star$, 
\begin{equation}
\label{eq:delta.rate}
\delta_n = \frac{s^\star \log(p/s^\star)}{n\ell((\rho+1)s^\star)}, 
\end{equation}
and $\rho$ is greater than $\rho_0$ in \eqref{eq:rho0}.  
\end{thm}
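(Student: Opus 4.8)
The plan is to obtain Theorem~\ref{thm:estimation.rate} as a corollary of the prediction-loss bound (Theorem~\ref{thm:prediction.rate}) and the effective-dimension bound (Theorem~\ref{thm:dimension}), by converting prediction error into estimation error on the event that the posterior sits on models of size $O(s^\star)$, where the least-favorable eigenvalue is bounded below.

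First I would simplify the rate. Under the stated condition $a > r\max\{1+\lambda,1-\lambda\}$, each branch of $q(R,\lambda,r,a)$ is negative for all large $p$: for $\lambda\ge 0$ the bracket $r(1+\lambda)-a<0$; for $\lambda\in[-1,0)$ the condition gives $a>r(1-\lambda)\ge r$, so $r-a<0$; and for $\lambda<-1$ it gives $a>r(1-\lambda)>-r\lambda$, so $-r\lambda-a<0$. Hence $\eps_n=s^\star\log(p/s^\star)$ in Theorem~\ref{thm:prediction.rate}, and therefore $\delta_n=\eps_n/\{n\,\ell((\rho+1)s^\star)\}$. I would also note that $\max\{1+\lambda,1-\lambda\}\ge\max\{1+\lambda,1,-\lambda\}$ for every $\lambda$ (the term $1-\lambda$ dominates both $1$ and $-\lambda$ when $\lambda\le 0$, and $1+\lambda$ dominates both when $\lambda\ge 0$), so the hypothesis here is at least as strong as the one required by Theorem~\ref{thm:dimension}, which therefore applies with the fixed $\rho>\rho_0$ of \eqref{eq:rho0}. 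Finally, we may assume $\ell((\rho+1)s^\star)>0$, since otherwise $\delta_n=\infty$, $V_{M\delta_n}=\emptyset$, and there is nothing to prove.

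The key step is a set inclusion. Let $M$ be the constant supplied by Theorem~\ref{thm:prediction.rate}, and take $\beta\in V_{M\delta_n}$. If $|S_\beta|\ge\rho s^\star$ then $\beta\in U_n=\{\beta:|S_\beta|\ge\rho s^\star\}$. Otherwise $|S_\beta|<\rho s^\star$; put $T=S_\beta\cup S^\star$, so that $|T|<(\rho+1)s^\star$ and both $\beta$ and $\beta^\star$ are supported on $T$. Then, using the monotonicity of $\ell$,
\[ \|X\beta-X\beta^\star\|^2=(\beta-\beta^\star)_T^\top X_T^\top X_T(\beta-\beta^\star)_T\ \ge\ n\,\ell(|T|)\,\|\beta-\beta^\star\|^2\ \ge\ n\,\ell((\rho+1)s^\star)\,\|\beta-\beta^\star\|^2\ >\ M\eps_n, \]
so $\beta\in B_{M\eps_n}$. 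Hence $V_{M\delta_n}\subseteq U_n\cup B_{M\eps_n}$, and consequently $\Pi^n(V_{M\delta_n})\le\Pi^n(U_n)+\Pi^n(B_{M\eps_n})$. Taking $\E_{\beta^\star}$ and then the supremum over $s^\star$-sparse $\beta^\star$, the right-hand side tends to $0$ by Theorems~\ref{thm:dimension} and~\ref{thm:prediction.rate} respectively, which proves the claim with the same $M$ as in Theorem~\ref{thm:prediction.rate}.

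I do not anticipate a serious obstacle: the result is essentially a corollary, and the only point that genuinely needs the earlier machinery is the conversion from $\|X(\beta-\beta^\star)\|$ to $\|\beta-\beta^\star\|$, which requires a uniform lower bound on the smallest eigenvalue of $n^{-1}X_S^\top X_S$ over models $S$ of size up to $(\rho+1)s^\star$. This is exactly what Theorem~\ref{thm:dimension} makes available, since it confines the posterior to models of size below $\rho s^\star$; it is also why the rate $\delta_n$ carries the factor $\ell((\rho+1)s^\star)$ rather than $\ell(s^\star)$, and why Assumption~\ref{asmp:3} (which only controls $\ell(s^\star)$) is not by itself enough to conclude $\delta_n\to 0$. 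The sole bookkeeping subtlety is that $\rho s^\star$ and $(\rho+1)s^\star$ need not be integers, which is harmless: $|T|$ is an integer strictly below $(\rho+1)s^\star$, so $\ell(|T|)\ge\ell(\lfloor(\rho+1)s^\star\rfloor)$ already suffices in the displayed chain of inequalities.
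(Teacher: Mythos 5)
Your proposal is correct and follows essentially the same route as the paper: combine the effective-dimension bound (Theorem~\ref{thm:dimension}) with the prediction-loss bound (Theorem~\ref{thm:prediction.rate}) via the inclusion $V_{M\delta_n}\subseteq U_n\cup B_{M\eps_n}$, using $\|X(\beta-\beta^\star)\|^2\ge n\,\ell(|S_{\beta-\beta^\star}|)\,\|\beta-\beta^\star\|^2$ and the monotonicity of $\ell$ on models of size at most $(\rho+1)s^\star$. Your write-up is in fact slightly more careful than the paper's (checking that the hypothesis $a>r\max\{1+\lambda,1-\lambda\}$ both kills the $q$-term in $\eps_n$ and implies the hypothesis of Theorem~\ref{thm:dimension}), but the argument is the same.
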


\begin{proof}
See Appendix~\ref{SS:proof3}.  
\end{proof}

Under Assumption~\ref{asmp:1} and \ref{asmp:3}, $\ell((\rho+1)s^\star)$ is bounded with probability $1$. Hence, our rate is $n^{-1}s^\star \log(p/s^\star)$, better than the rate $n^{-1}s^\star\log p$ for lasso   \citep{zhang2008sparsity}. 

\subsection{Variable selection consistency}
\label{SS:selection}

One of our primary objectives in introducing the $\lambda$-dependent prior distribution that accounts for collinearity structure in the design matrix is for the purpose of more effective variable selection.  So it is imperative that we can demonstrate theoretically that, at least asymptotically, our posterior distribution will concentrate on the correct configuration $S^\star$.  The following theorem establishes this variable selection consistency property.  


\begin{thm}
\label{thm:selection}
In addition to Assumptions~\ref{asmp:1}--\ref{asmp:3}, assume that the constant $a$ in the prior is such that $a > 1$ and $p^a \gg s^\star e^{Gs^\star}$, where 
\[ G = \alpha + \tfrac12 \log\{1+\alpha g \kappa(S^\star)^{\max\{\lambda+1, 1, -\lambda\}}\} = O(1). \]
Then 
\[ \sup \E_{\beta^\star}\bigl[ \Pi^n(\{\beta: S_\beta \supset S_{\beta^\star}\}) \bigr] \to 0, \quad n \to \infty, \]
where the supremum is over all $\beta^\star$ that are $s^\star$-sparse.  Furthermore, if 
\begin{equation}
\label{eq:beta.min}
\min_{j \in S^\star} |\beta_j^\star| \geq \varrho_n := \frac{\sigma}{\sqrt{(\ell(s^\star))}}\Bigl\{\frac{2M}{n\alpha(1-\alpha)}\log p\Bigr\}^{1/2},
\end{equation}
where $M > a+1$ and $p^{M-(a+1)} \gg m^{s^\star}$, then 
\[ \E_{\beta^\star}\bigl[ \Pi^n(\{\beta: S_\beta \not\supseteq S_{\beta^\star}\}) \bigr] \to 0, \quad n \to \infty. \]
If both sets of conditions hold, then variable selection consistency holds, i.e., 
\[ \E_{\beta^\star}\bigl[ \Pi^n(\{\beta: S_\beta = S_{\beta^\star}\}) \bigr] \to 1, \quad n \to \infty. \]
\end{thm}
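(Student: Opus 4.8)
The plan is to prove the two halves separately—no over-selection and no under-selection—and then combine them via a union bound on the complementary events, since $\{S_\beta = S_{\beta^\star}\} = \{S_\beta \supseteq S_{\beta^\star}\} \cap \{S_\beta \not\supset S_{\beta^\star}\}$ (thinking of $\supset$ as strict containment). For the over-selection part, I would work with the closed-form posterior $\pi^n_\lambda(S) \propto m_\lambda(Y \mid S)\,\pi_\lambda(S)$ from \eqref{eq:marginal.S} and bound the posterior ratio $\pi^n_\lambda(S)/\pi^n_\lambda(S^\star)$ for configurations $S \supsetneq S^\star$. Taking expectations, the marginal-likelihood ratio $m_\lambda(Y \mid S)/m_\lambda(Y\mid S^\star)$ contributes, on average, a factor that grows at most like $e^{G(|S| - s^\star)}$ (this is where the quantity $G$ enters—it controls the per-added-variable cost coming from the $(1+\alpha g k_S d_{S,i}^{\lambda+1})^{-1/2}$ determinant terms and the residual-sum-of-squares fluctuation), while the prior ratio $\pi_\lambda(S)/\pi_\lambda(S^\star)$ supplies a decay of order $p^{-a(|S|-s^\star)}$ from $f_n$ in \eqref{eq:fns}, up to the collinearity-penalty ratio $D(S)^{-\lambda/(2|S|)}/D(S^\star)^{-\lambda/(2s^\star)}$, which under Assumption~\ref{asmp:3} and the condition-number truncation is bounded by a harmless exponential in the model sizes. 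Summing over all $S \supsetneq S^\star$ with $|S| \le (\rho+1)s^\star$ (we may restrict to such $S$ by Theorem~\ref{thm:dimension}) introduces a combinatorial factor at most $\binom{p}{k} \le p^k$ for each overshoot size $k$; the hypothesis $p^a \gg s^\star e^{G s^\star}$ together with $a>1$ guarantees the resulting geometric-type series is $o(1)$.

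For the under-selection half, one argues that if $S \not\supseteq S^\star$, then at least one strong signal $\beta_j^\star$ with $j \in S^\star \setminus S$ is omitted, forcing the residual $\|Y - \hat Y_S\|^2$ to be inflated by roughly $n\,\ell(s^\star)\,\beta_j^{\star 2}$ relative to a model containing $S^\star$ (this is the role of the beta-min bound \eqref{eq:beta.min}: it makes $n\,\ell(s^\star)\,\varrho_n^2 \asymp \frac{2M}{\alpha(1-\alpha)}\log p$, large enough to dominate). Exponentiating through the power-likelihood factor $L_n(\beta_{S+})^\alpha$ in $m_\lambda$ and controlling the Gaussian cross term via a Chernoff/subexponential bound on $\eps^\top X$, the expected posterior ratio $\E_{\beta^\star}\{\pi^n_\lambda(S)/\pi^n_\lambda(S^\star)\}$ for each such $S$ is at most $p^{-M}$ up to lower-order factors; the prior ratio and the combinatorial count $p^{|S^\star \setminus S|}$ are absorbed because $M > a+1$ and $p^{M-(a+1)} \gg m^{s^\star}$, leaving a summable bound over all $S \not\supseteq S^\star$ (again restricting to $|S| \le (\rho+1)s^\star$ via Theorem~\ref{thm:dimension}).

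The main obstacle I anticipate is the over-selection bound, specifically obtaining the sharp $e^{G(|S|-s^\star)}$ control on the expected marginal-likelihood ratio uniformly over all $S \supsetneq S^\star$. The difficulty is twofold: first, the residual sum of squares $\|Y - \hat Y_S\|^2$ is not independent across nested models and its fluctuation must be handled with a chi-squared tail bound that is uniform over the exponentially many $S$, which is where the $\log p$ budget and the condition-number truncation $\kappa(S) < Cp^r$ are essential; second, the standardizing factor $k_S$ and the eigenvalue powers $d_{S,i}^{\lambda+1}$ behave very differently depending on the sign of $\lambda+1$, so the determinant term $\prod_i(1+\alpha g k_S d_{S,i}^{\lambda+1})^{-1/2}$ must be bounded case-by-case in $\lambda$—the $\max\{\lambda+1,1,-\lambda\}$ exponent in the definition of $G$ is precisely the artifact of taking the worst case. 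Once that uniform expectation bound is in hand, the remaining steps are routine summation arguments driven by the stated conditions on $a$, $M$, and $p$. Finally, the last display follows since $\Pi^n(\{\beta: S_\beta = S_{\beta^\star}\}) \ge 1 - \Pi^n(\{S_\beta \supset S_{\beta^\star}\}) - \Pi^n(\{S_\beta \not\supseteq S_{\beta^\star}\})$ and both subtracted terms have expectation tending to $0$.
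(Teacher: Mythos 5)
Your proposal follows essentially the same route as the paper's proof: split into over- and under-selection, bound the posterior ratio $\pi^n(S)/\pi^n(S^\star)$ using the closed-form marginal likelihood, restrict to $|S|\le\rho s^\star$ via Theorem~\ref{thm:dimension}, use the beta-min condition together with the eigenvalue bound $\|(I-P_S)X_{S^\star}\beta^\star_{S^\star}\|^2\ge n\ell(s^\star)\|\beta^\star_{S^\star\cap S^c}\|^2$ to extract a $p^{-M(s^\star-|S^\star\cap S|)}$ decay, and finish with a union bound on the two complementary events. The only small imprecision is that in the paper the factor $e^{Gs^\star}$ is a \emph{fixed} prefactor depending only on the true model---arising from $R_S\le m^{s^\star}$ and from the shrinkage-center term $\hat\beta_{S^\star}^\top Q_{S^\star}^{-1}\hat\beta_{S^\star}$, which is decoupled by H\"older's inequality and bounded by $e^{\alpha s^\star}$ using Assumption~\ref{asmp:2}---rather than a per-added-variable cost $e^{G(|S|-s^\star)}$ as you suggest; this does not affect the argument, since the stated hypothesis $p^a\gg s^\star e^{Gs^\star}$ is exactly what the former bound requires.
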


\begin{proof}
See Appendix~\ref{SS:proof4}.  
\end{proof}

The extra conditions on $(p,s^\star)$ in Theorem~\ref{thm:selection} effectively require that the true configuration size, $s^\star$, is small relative to $\log p$ and, furthermore, that the constant $a$ in \eqref{eq:fns} is large enough that $f_n(s)$ concentrates on comparatively small configurations.  Also, the non-zero $\beta^\star$ values are more difficult to detect if their magnitudes are small.  This is intuitively clear, and also shows up in our simulation results for Cases~1--2 in Section~\ref{s:sim}.  Theorem~\ref{thm:selection} gives a mathematical explanation of this intuition, stating that variable selection based on our empirical Bayes posterior will be correct asymptotically if the true signals have magnitude greater than the threshold $\varrho_n$ in \eqref{eq:beta.min}.  This is the so-called {\em beta-min} condition; see, e.g., \citet{buhlmann2011statistics} or \citet{arias2014}.

\ifthenelse{1=1}{}{
\begin{thm}
\label{thm:selection1}
For constants $a \ge 3$ and $M>a+3$, and if $p^{M-(a+3)} >> m^{s^\star}$ where  $m=(1+\alpha g \kappa(S^\star)^{\max\{\lambda+1, 1, -\lambda\}})^\frac{1}{2}$, then take any $\beta^\star$ with $S^\star=S_{\beta^\star}$, and
$$
\underset{j \in S^\star}{\min}|\beta^\star_j| \ge \varrho_n=\frac{\sigma}{l(s^\star)}\big\{\frac{2M}{\alpha(1-h\alpha)}\log p\big\}^\frac{1}{2},
$$
we have $\E{}_{\beta^\star}\big[\Pi^n(\beta:S_\beta \nsupseteq S_{\beta^\star})\big] \to 0$, uniformly over $\beta^{\star}$
\end{thm}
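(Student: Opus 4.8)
The plan is to push everything through the marginal posterior over configurations. By \eqref{eq:marginal.S}, for any collection $\mathcal{T}$ of subsets of $\{1,\dots,p\}$ we have $\Pi^n(\{\beta : S_\beta \in \mathcal{T}\}) = \sum_{S\in\mathcal{T}}\pi_\lambda^n(S)$, and since $\{S \neq S^\star\}$ is the disjoint union of $\{S \supsetneq S^\star\}$ and $\{S \not\supseteq S^\star\}$, it suffices to bound $\E_{\beta^\star}[\sum_{S\supsetneq S^\star}\pi_\lambda^n(S)]$ and $\E_{\beta^\star}[\sum_{S\not\supseteq S^\star}\pi_\lambda^n(S)]$, the claim $\E_{\beta^\star}[\Pi^n(S_\beta = S^\star)]\to1$ then following by subtraction. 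For each $S$ I would control the posterior odds $\pi_\lambda^n(S)/\pi_\lambda^n(S^\star) = \{m_\lambda(Y\mid S)/m_\lambda(Y\mid S^\star)\}\cdot\{\pi_\lambda(S)/\pi_\lambda(S^\star)\}$, which is legitimate because $\kappa(S^\star)\le\omega(s^\star)<\infty$ by Assumption~\ref{asmp:3}, so $S^\star$ lies in the support of $\pi_\lambda$. Using the closed form \eqref{eq:marginal}, the likelihood ratio splits into the determinant-type factors $\prod_i(1+\alpha g k_S d_{S,i}^{\lambda+1})^{\mp 1/2}$ and an exponential factor built from $\|y-\hat{y}_S\|^2-\|y-\hat{y}_{S^\star}\|^2$ together with the $(1-\phi_n)^2$ correction. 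Bounding $k_S d_{S,i}^{\lambda+1}$ by $\kappa(S)^{\max\{\lambda+1,1,-\lambda\}}$ --- the standardizing factor $k_S$ is exactly what makes this hold with no spurious dependence on $|S|$, and this is where the exponent in $G$ originates --- shows the determinant ratio is at most $e^{Gs^\star}$, while the $(1-\phi_n)^2$ term is a negligible factor by Assumption~\ref{asmp:2}. For the prior ratio, $f_n$ contributes exactly $c^{-(|S|-s^\star)}p^{-a(|S|-s^\star)}$, and $\pi_\lambda(\cdot\mid|S|=s)$ is, by Assumption~\ref{asmp:3} and the $\kappa(S)<Cp^r$ restriction in \eqref{eq:prior.S.given.size}, within polynomial-in-$p$ factors of uniform, factors that are absorbed once $a$ (and $M$, below) are as large as hypothesized.

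\emph{No over-selection.} Fix $S\supsetneq S^\star$; then $\hat{y}_{S^\star}$ and $\hat{y}_S$ are both unbiased for $X\beta^\star$, and $\|y-\hat{y}_{S^\star}\|^2-\|y-\hat{y}_S\|^2 = \|(P_S-P_{S^\star})\eps\|^2\sim\sigma^2\chi^2_{|S|-s^\star}$ since $P_S-P_{S^\star}$ is a rank-$(|S|-s^\star)$ projection. So the exponential factor in the odds is $\exp\{\tfrac{\alpha}{2}\chi^2_{|S|-s^\star}\}$ up to lower-order terms, and since $\alpha\in(0,1)$ its expectation is $(1-\alpha)^{-(|S|-s^\star)/2}$. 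Combining with the $e^{Gs^\star}$ bound and the prior ratio, and summing over the $\binom{p-s^\star}{k}\le p^k$ configurations with $|S|=s^\star+k$, the expected posterior odds form a convergent geometric series in $k$ (ratio of order $p^{1-a}$, using $a>1$) of value of order $s^\star e^{Gs^\star}p^{1-a}$, which tends to $0$ under $p^a\gg s^\star e^{Gs^\star}$. I would first invoke Theorem~\ref{thm:dimension} to discard configurations with $|S|\ge\rho s^\star$, keeping the condition numbers in play under control.

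\emph{No under-selection.} Fix $S\not\supseteq S^\star$, missing $v\ge1$ coordinates of $S^\star$ and containing $w\ge0$ spurious ones. There is now a bias term $b_S := \|(I-P_S)X_{S^\star}\beta^\star\|^2$, and the beta-min bound \eqref{eq:beta.min} together with Assumption~\ref{asmp:3} makes $b_S$ bounded below by a constant multiple of $v\log p$, calibrated precisely so that the net exponential factor below is $\le p^{-Mv}$. Indeed, writing $\|y-\hat{y}_S\|^2 = b_S + 2\langle(I-P_S)X_{S^\star}\beta^\star,\eps\rangle + \|(I-P_S)\eps\|^2$, the odds carry the favourable factor $\exp\{-\tfrac{\alpha}{2\sigma^2}b_S\}$ at the cost of a mean-zero Gaussian cross-term of variance $4\sigma^2 b_S$; its moment generating function contributes $\exp\{\tfrac{\alpha^2}{2\sigma^2}b_S\}$, so the two combine to $\exp\{-\tfrac{\alpha(1-\alpha)}{2\sigma^2}b_S\}$ --- this is exactly why the constant in $\varrho_n$ includes the $1-\alpha$ --- which by the calibration of \eqref{eq:beta.min} is $\le p^{-Mv}$. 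The prior ratio now works against us, contributing $p^{av}p^{-aw}$, so the net contribution of the size-$(v,w)$ configurations, of which there are $\binom{s^\star}{v}\binom{p-s^\star}{w}\le s^{\star v}p^{w}$, is of order $e^{Gs^\star}p^{(a-M)v}p^{-aw}$ times $\chi^2$-residual and controlled conditional-prior factors; the $w$-series converges since $a>1$ and the $v$-series is dominated by $v=1$, leaving a bound of order $s^\star e^{Gs^\star}p^{a-M}$, which tends to $0$ since $M>a+1$ and $p^{M-(a+1)}\gg m^{s^\star}=e^{(G-\alpha)s^\star}$. Adding the two bounds gives $\E_{\beta^\star}[\Pi^n(S_\beta = S^\star)] = 1 - \E_{\beta^\star}[\Pi^n(S_\beta\supsetneq S^\star)] - \E_{\beta^\star}[\Pi^n(S_\beta\not\supseteq S^\star)]\to1$, with the first bound holding uniformly over $s^\star$-sparse $\beta^\star$ since $G=O(1)$ uniformly.

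\emph{Main obstacles.} The bookkeeping --- $\chi^2$ moment generating functions, combinatorial counts, the $(1-\phi_n)^2$ correction via Assumptions~\ref{asmp:1}--\ref{asmp:2} --- is routine. The two genuinely delicate points are: (i) the uniform bound $k_S d_{S,i}^{\lambda+1}\le\kappa(S)^{\max\{\lambda+1,1,-\lambda\}}$, i.e.\ showing that the standardizing factor $k_S$ converts the eigenvalue spread of $X_S^\top X_S$ into the clean exponent appearing in $G$ for every admissible configuration; and (ii) in the under-selection step, verifying that the approximation bias $b_S$ dominates its Gaussian fluctuation at exactly the threshold $\varrho_n$ of \eqref{eq:beta.min}, so that integrating out the noise genuinely yields a $p^{-Mv}$ factor rather than a subpolynomial one.
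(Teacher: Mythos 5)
Your proposal follows essentially the same route as the paper's own argument for this statement: bound the posterior odds against $S^\star$, control the determinant factor via $k_S d_{S,i}^{\lambda+1}\le\kappa(S)^{\max\{\lambda+1,1,-\lambda\}}$, split $y^\top(P_S-P_{S^\star})y$ into the approximation bias $b_S$, a Gaussian cross-term, and a quadratic term, use the beta-min calibration together with the eigenvalue condition to turn the net exponential factor into $p^{-M(s^\star-|S\cap S^\star|)}$, and close with the combinatorial sum over configuration sizes. The one step you gloss over is that the cross-term $\langle(I-P_S)X_{S^\star}\beta^\star,\eps\rangle$ and the quadratic term are not independent as written, so you cannot simply multiply their moment generating functions; the paper fixes this by first bounding $\eps^\top(P_S-P_{S^\star})\eps\le\eps^\top(P_S-P_{S\cap S^\star})\eps$, after which the quadratic term depends only on $P_S\eps$ while the cross-term depends only on $(I-P_S)\eps$ --- a standard repair that leaves your constants and conclusion intact.
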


When the true signals are not relatively too small, i.e. $\underset{j \in S^\star}{\min}|\beta^\star_j| \ge \varrho_n$, our method can successfully identify the active predictors. And asymptotically, the selected model can cover the true model with probability $1$. 

\begin{proof}
See Appendix~\ref{S:proof4}.  
\end{proof}

\begin{thm}
\label{thm:selection2}
If $p$ and the constant $a>0$ in $f_n(s)$ are such that $p^{a} \gg s^\star m^{s^\star}$, where $m=(1+\alpha g \kappa(S^\star)^{\max\{\lambda+1, 1, -\lambda\}})^\frac{1}{2}$. Then $\E{}_{\beta^\star}\big[\Pi^n(\beta:S_\beta \supset S_{\beta^\star})\big] \to 0$, uniformly over $\beta^{\star}$
\end{thm}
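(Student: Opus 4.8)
The plan is to work entirely with the closed-form marginal posterior $\pi_\lambda^n(S)\propto m_\lambda(Y\mid S)\,\pi_\lambda(S)$ from \eqref{eq:marginal}--\eqref{eq:marginal.S} and to prove the two one-sided statements $\sup\E_{\beta^\star}[\Pi^n(\{S_\beta\supsetneq S^\star\})]\to0$ and $\E_{\beta^\star}[\Pi^n(\{S_\beta\not\supseteq S^\star\})]\to0$ separately; the third claim is then immediate from $\{S_\beta=S^\star\}^c=\{S_\beta\supsetneq S^\star\}\cup\{S_\beta\not\supseteq S^\star\}$ (a disjoint union). In both directions I would lower-bound the normalizing sum $\sum_{S'}m_\lambda(Y\mid S')\pi_\lambda(S')$ by its single term at $S^\star$, so that it suffices to control $\sum_{S\in\mathcal A}\E_{\beta^\star}[m_\lambda(Y\mid S)/m_\lambda(Y\mid S^\star)]\cdot\pi_\lambda(S)/\pi_\lambda(S^\star)$ over the relevant family $\mathcal A$ of configurations, after first discarding $U_n=\{|S_\beta|\ge\rho s^\star\}$ via Theorem~\ref{thm:dimension} (for a fixed $\rho>\rho_0$), so that all surviving $S$, and all unions $S\cup S^\star$, have size $O(s^\star)$ and hence, by Assumption~\ref{asmp:3}, bounded eigenvalues of $n^{-1}X_S^\top X_S$ and condition number below $Cp^r$. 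Two preliminary observations will be used repeatedly. First, $k_S d_{S,i}^{\lambda+1}$ is scale-free: writing $d_{S,i}=n\mu_i$ with $\mu_i$ the eigenvalues of $n^{-1}X_S^\top X_S$, the factors of $n$ cancel and $k_S d_{S,i}^{\lambda+1}=\mu_i^{\lambda+1}\{\textstyle\sum_j\mu_j^{-1}\}/\{\sum_j\mu_j^{\lambda}\}\le C'\kappa(S)^{\max\{\lambda+1,1,-\lambda\}}$, so every determinant factor $(1+\alpha g k_S d_{S,i}^{\lambda+1})^{-1/2}$ of $m_\lambda$ lies in a bounded interval; in particular $m_\lambda(Y\mid S^\star)$ contributes an Occam factor of order $e^{-(G-\alpha)s^\star}$ while the determinant term of any other $m_\lambda(Y\mid S)$ is at most $1$. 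Second, on configurations of size $O(s^\star)$ every determinant weight satisfies $D(S)^{-\lambda/(2|S|)}=(D(S)^{1/|S|})^{-\lambda/2}\asymp n^{-\lambda/2}$ --- the exponent $1/(2|S|)$ is exactly the standardization that removes the dependence on $n$ and on $|S|$ --- so $\pi_\lambda(S\mid|S|=s)$ is comparable, up to constants depending only on $u(\cdot)/\ell(\cdot)$, to the uniform prior over the (essentially all) well-conditioned $S$ of that size; hence $\pi_\lambda(S^\star\mid|S^\star|=s^\star)\asymp\binom{p}{s^\star}^{-1}$, and using the identity $\binom{p}{s^\star}\binom{p-s^\star}{k}=\binom{p}{s^\star+k}\binom{s^\star+k}{k}$ one gets $\sum_{S\supsetneq S^\star,\,|S|=s^\star+k}\pi_\lambda(S)/\pi_\lambda(S^\star)\asymp c^{-k}p^{-ak}\binom{s^\star+k}{k}$, while for general $S\not\supseteq S^\star$ with $t=|S^\star\setminus S|$ and $u=|S\setminus S^\star|$, $\pi_\lambda(S)/\pi_\lambda(S^\star)\asymp(c\,p^{a+1}/s^\star)^{\,t-u}$ up to a bounded factor.

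For no overfitting, take $S\supseteq S^\star$ with $|S|=s^\star+k$, $k\ge1$, and let $P_S$ denote the orthogonal projection onto $\mathrm{col}(X_S)$. The determinant-term ratio is $\le e^{(G-\alpha)s^\star}$, the residual-sum-of-squares difference is $\|y-\hat y_{S^\star}\|^2-\|y-\hat y_S\|^2=\|(P_S-P_{S^\star})\eps\|^2$, distributed as $\sigma^2$ times a chi-square on $k$ degrees of freedom with $\alpha$-power moment generating function $(1-\alpha)^{-k/2}$ --- finite precisely because $\alpha<1$ --- and the $(1-\phi_n)^2$ quadratic-form correction is $o(1)$ by Assumption~\ref{asmp:2}; hence $\E_{\beta^\star}[m_\lambda(Y\mid S)/m_\lambda(Y\mid S^\star)]\lesssim e^{(G-\alpha)s^\star}(1-\alpha)^{-k/2}$. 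Multiplying by the prior ratio, the negative-binomial sum $\sum_{k\ge1}\binom{s^\star+k}{k}\{e^{-\frac12\log(1-\alpha)}/(cp^a)\}^k=(1-\cdot)^{-(s^\star+1)}-1\lesssim s^\star/p^a$ and the $o(1)$ contribution of $U_n$ from Theorem~\ref{thm:dimension} give a bound of order $s^\star e^{(G-\alpha)s^\star}/p^a$, which tends to $0$ uniformly over $s^\star$-sparse $\beta^\star$ under the hypothesis $p^a\gg s^\star e^{Gs^\star}$ (and $a>1$ is what makes the series converge).

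For no underfitting, take $S\not\supseteq S^\star$ with $|S|<\rho s^\star$. Here $\|y-\hat y_S\|^2-\|y-\hat y_{S^\star}\|^2=\|(I-P_S)X\beta^\star\|^2+2\,\eps^\top(I-P_S)X\beta^\star+\|P_{S^\star}\eps\|^2-\|P_S\eps\|^2$, and I would bound the approximation error from below by a Schur-complement (partial-eigenvalue) argument, $\|(I-P_S)X\beta^\star\|^2=\|(I-P_S)X_{S^\star\setminus S}\beta^\star_{S^\star\setminus S}\|^2\ge n\,\ell(|S\cup S^\star|)\,\|\beta^\star_{S^\star\setminus S}\|^2\ge n\,\ell(|S\cup S^\star|)\,t\,\varrho_n^2$, which by \eqref{eq:beta.min} is a constant multiple of $M\,t\log p$. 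Taking $\E_{\beta^\star}$, the Gaussian term $\eps^\top(I-P_S)X\beta^\star$ (a function of $(I-P_S)\eps$) and the term $\|P_S\eps\|^2$ (a function of the independent $P_S\eps$) contribute, through their moment generating functions, factors $\exp\{\tfrac{\alpha^2}{2\sigma^2}\|(I-P_S)X\beta^\star\|^2\}$ and $(1-\alpha)^{-|S|/2}=e^{O(s^\star)}$, while $\exp\{-\tfrac{\alpha}{2\sigma^2}\|P_{S^\star}\eps\|^2\}\le1$, so $\E_{\beta^\star}[m_\lambda(Y\mid S)/m_\lambda(Y\mid S^\star)]\lesssim e^{O(s^\star)}\exp\{-\tfrac{\alpha(1-\alpha)}{2\sigma^2}\|(I-P_S)X\beta^\star\|^2\}\lesssim e^{O(s^\star)}p^{-Mt}$. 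Multiplying by $\pi_\lambda(S)/\pi_\lambda(S^\star)\asymp(cp^{a+1}/s^\star)^{t-u}$ and summing over $t\ge1$ and $u\ge0$ (the $u$-sum converges like $e^{s^\star/(cp^a)}$ and the $t$-sum like $e^{cp^{a+1-M}}-1$) yields a bound of order $e^{O(s^\star)}p^{-(M-a-1)}$, which vanishes because $M>a+1$ and $p^{M-(a+1)}\gg m^{s^\star}$.

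Combining the two displays gives $\E_{\beta^\star}[\Pi^n(\{S_\beta=S^\star\})]=1-\E_{\beta^\star}[\Pi^n(\{S_\beta\supsetneq S^\star\})]-\E_{\beta^\star}[\Pi^n(\{S_\beta\not\supseteq S^\star\})]\to1$. The step I expect to be the main obstacle is the no-underfitting estimate, where the beta-min threshold $\varrho_n$, the dimension restriction supplied by Theorem~\ref{thm:dimension}, the Schur-complement lower bound on $\|(I-P_S)X\beta^\star\|^2$, and the calibration $M>a+1$ all have to interlock; in particular one must confirm that the smallest eigenvalue $\ell(|S\cup S^\star|)$ with $|S\cup S^\star|=O(s^\star)$ is of the same order as the $\ell(s^\star)$ appearing in \eqref{eq:beta.min}. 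A secondary but still delicate point is the no-overfitting bookkeeping: the determinant standardization must make $\pi_\lambda(S\mid|S|=s)$ close enough to uniform that the binomial normalizers cancel through the identity above, and the power-likelihood regularization $\alpha<1$ must be kept in force throughout so that the chi-square moment generating functions remain finite.
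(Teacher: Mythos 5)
Your argument for the no-overfitting event is essentially the paper's own: bound $\Pi^n(S)$ for $S\supset S^\star$ by the ratio to the $S^\star$-term of the normalizer, control $\|(P_S-P_{S^\star})\eps\|^2\sim\sigma^2\chisq(k)$ through its moment generating function (finite because the likelihood power is below $1$), bound the determinant ratio by $m^{s^\star}$ via $\kappa(S)^{-\max\{\lambda+1,1,-\lambda\}}\le k_Sd_{S,i}^{\lambda+1}\le\kappa(S)^{\max\{\lambda+1,1,-\lambda\}}$, reduce the prior ratio to the negative-binomial sum $\sum_{k\ge1}\binom{s^\star+k}{k}(cp^a)^{-k}\lesssim s^\star/p^a$ after restricting to $|S|\le\rho s^\star$ by Theorem~\ref{thm:dimension}, with the only substantive deviation being that you factor the expectation using the independence of $(P_S-P_{S^\star})\eps$ and $\hat\beta_{S^\star}$ where the paper applies H{\"o}lder's inequality --- a harmless simplification. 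One caution: the correction $\tfrac{\alpha}{2\sigma^2}(1-\phi_n)^2\hat\beta_{S^\star}^\top Q_{S^\star}^{-1}\hat\beta_{S^\star}$ is a random exponent of order $s^{\star2}/n$, not a deterministic $o(1)$, so its exponential moment must be controlled via the non-central chi-square moment generating function as in the paper (which bounds it by $e^{\alpha s^\star}$ and accordingly strengthens the hypothesis to $p^a\gg s^\star e^{Gs^\star}$ with $G=\alpha+\log m$); this costs you only an $e^{o(s^\star)}$ factor but should be stated correctly.
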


\begin{proof}
See Appendix~\ref{S:proof5}.  
\end{proof}

Theorem 4 guarantees that the selected model can always contain the true covariates, while Theorem 5 claims that no unnecessary variables will be included as $n \to \infty$. Therefore, under certain conditions on $a$ and $p$, this model selection approach based on ECAP can hit the correct model consistently.
}

\section{Implementation details}
\label{s:comp}

\subsection{Stochastic search of the configuration space}

In order to compute the posterior probability, we need to evaluate $\pi_\lambda(S \mid |S|=s)$ as defined in \eqref{eq:prior.S.given.size}.  The difficulty comes from the denominator of \eqref{eq:prior.S.given.size}, i.e., 
\[ \sum_{S:|S|=s} D(S)^{-\lambda/2s}1\{\kappa(S)<Cp^r\}, \]
where, again, $D(S) = |X_S^\top X_S|$ is the determinant. Since $C$ and $r$ could be chosen large enough so that only extremely ill-conditioned cases would be excluded, there are approximately $\binom{p}{s}$ terms in the above summation. So brute-force computation can be done only for relatively small $p$.  Given that the eigenvalues are bounded from above and below for a given predictor matrix $X$, $D(S)^{1/s}$ is too, so it is not unreasonable to approximate the above summation by $\binom{p}{s}$. This approximation is exact in the case of $\lambda=0$ if all $S$ are included, and numerical experiments suggest that it is stable across a range of $p$, $s$, and $\lambda$.  Therefore, using this approximation, the posterior distribution for $S$ that we use is given by 
\begin{equation}
\label{eq:post.S}
\pi_\lambda^n(S) \propto m_\lambda(Y \mid S) \, D(S)^{-\frac{\lambda}{2|S|}} \binom{p}{|S|}^{-1} f_n(|S|). 
\end{equation}

Markov chain Monte Carlo (MCMC) methods can be used to compute this posterior but this tends to be inefficient in high-dimensional problems.  As an alternative, we employ the simplified shotgun stochastic search algorithm with screening \citep[S5,][]{c23}, a simplified version of shotgun stochastic search \citep[SSS,][]{c24}, to explore our posterior distribution. Different from traditional MCMC method, SSS does not attempt to approximate the posterior probability; instead, it only tries to explore high posterior probability regions as thoroughly as possible.         

Here is a summary of our SSS algorithm.  Let $S$ be a configuration of size $s$, with $\pi_\lambda^n(S)$, its corresponding (unnormalized) posterior. Define the neighborhood of $S$ as $\text{nbd}(S)=\{ S^+,S^0,S^-\}$, where $S^+$ is the set containing all $(s+1)$ dimensional configurations that include $S$, $S^0$ is the set containing all $s$-dimensional configurations that only have one variable different from variables in $S$, and $S^-$ is the set containing all $(s-1)$-dimensional configurations that are nested in $S$. The $t^\text{th}$ iteration of SSS goes as follows:

\begin{enumerate}
\item Given $S^{t}$, compute $\pi_\lambda^n(S)$ for all $S \in \text{nbd}(S^{t})=\{ S^{t+},S^{t0},S^{t-}\}$.           
\vspace{-1mm}
\item Sample $S^t_1$, $S^t_2$, $S^t_3$ respectively from $S^{t+}$, $S^{t0}$, $S^{t-}$, with probabilities $\propto \pi_\lambda^n(S_\cdot^t)$.
\vspace{-1mm}
\item Sample $S^{t+1}$ from $\{S^t_1,S^t_2,S^t_3\}$ with probabilities proportional to $\pi_\lambda^n(S^{t+})$, $\pi_\lambda^n(S^{t0})$, and $\pi_\lambda^n(S^{t-})$, obtained by summing.
\end{enumerate}
All visited configurations are recorded, and the final chosen configuration can be the maximum {\em a posteriori} model, median probability model (the model including those variables of which marginal inclusion probability is not less than $0.5$), or something else.  For our simulations in Section~\ref{s:sim}, the selected configuration $\hat S$ is the median probability model.  

While SSS has the ability to explore many more high posterior configurations than MCMC, it is still computationally expensive, especially in high-dimensional case.  For this reason, we adopt the S5 algorithm, which uses a screening technique to significantly decrease the computational cost.

\subsection{Choice of tuning parameters}
\label{ss:tuning}

\subsubsection{Choice of $\lambda$}

An ``ideal'' value $\lambda^\star$ of $\lambda$ is one that minimizes the Kullback--Leibler divergence of the marginal distribution $m_\lambda(y) = \sum_S m_\lambda(y \mid S) \pi_\lambda(S)$ from the true distribution of $Y$ or, equivalently, one that maximizes the expected log marginal likelihood, i.e., 
\[ \lambda^\star = \arg\max_\lambda \E\{ \log m_\lambda(Y) \}. \]
Unfortunately, the ideal value $\lambda^\star$ is not available because we do not know the true distribution of $Y$, nor can we estimate it with an empirical distribution.  However, a reasonable estimate of this ideal $\lambda$ is 
\[ \hat\lambda = \arg\max_\lambda \log m_\lambda(Y). \]
Indeed, Figure~\ref{fig:spa} shows $\log m_\lambda(Y)$ for several different $Y$ samples, along with an approximation of $\E\{\log m_\lambda(Y)\}$ based on point-wise averaging.  Notice that the individual log marginal likelihoods are maximized very close by where the expectation is maximized.  

\begin{figure}
    \centering
    \scalebox{0.08}{\includegraphics{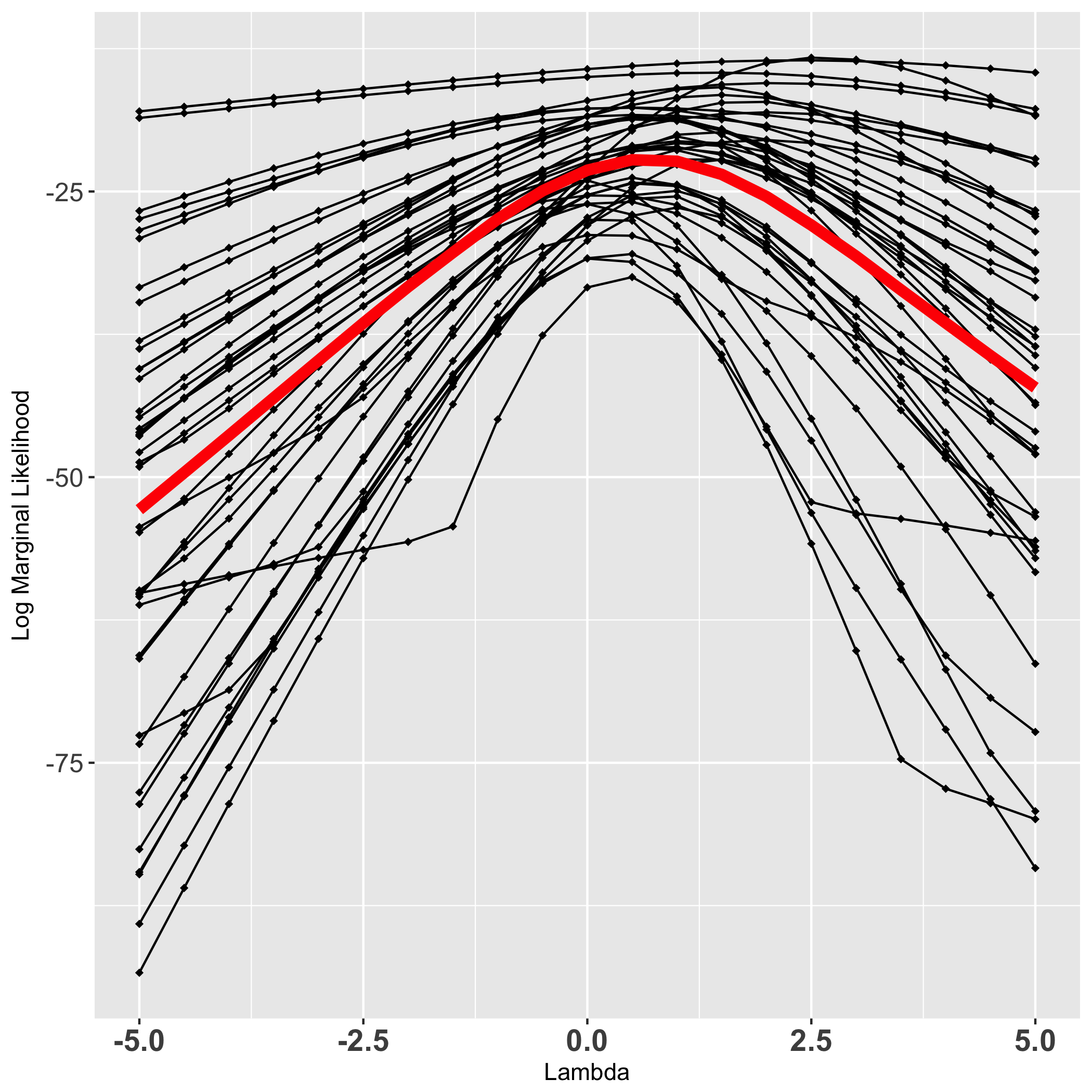}}
    \caption{Black lines are $\lambda \mapsto \log m_\lambda(Y)$ for different $Y$ samples, and the red line is the point-wise average, which approximates $\lambda \mapsto \E\{\log m_\lambda(Y)\}$.}
\label{fig:spa}
\end{figure}

There is still one more obstacle in obtaining $\hat\lambda$, namely, that we cannot directly compute the summation involved in $m_\lambda(Y)$ due to the large number of configurations $S$.  Fortunately, we can employ an importance sampling strategy to overcome this.  Specifically, we have 
\begin{align*}
m_\lambda(Y) & =\frac{\sum_S m_\lambda(Y \mid S) D(S)^{-\lambda/2|S|} f_n(|S|) \binom{p}{|S|}^{-1}}{\sum_S D(S)^{-\lambda/2|S|} f_n(|S|) \binom{p}{|S|}^{-1}}\\
&\approx \frac{\sum_{\ell=1}^N m_\lambda(Y \mid S_\ell) D(S_\ell)^{-\lambda/2|S_\ell|}}{\sum_{\ell=1}^N D(S_\ell)^{-\lambda / 2|S_\ell|}},
\end{align*}
where $\{S_\ell: \ell=1,\ldots,N\}$ are samples from $\pi_0(S)\propto f_n(|S|){p \choose |S|}^{-1}$.  In our numerical results that follow, we use this $m_\lambda(Y)$ to estimate $\hat\lambda$ defined above. 

As discussed in Section~\ref{ss:illustration}, $\lambda$ plays an important role in both model prior and coefficient prior.  That is, for a fixed size $s$, a positive $\lambda$ favors model including predictors with relatively high correlation; a negative $\lambda$ favors model including predictors with relatively low correlation; $\lambda$ equals zero actually put equal mass on each model regardless of their predictors' correlation structure. The $\lambda$ in the conditional prior for $\beta_S$, given $S$, has a similar effect; see \citet{c15}.  Thus, a ``good'' estimate of $\lambda$ should be such that it reflects the correlation structure in $X$.  

To help see this, consider a few examples, each with $X$ of dimension $n=100$ and $p=500$, having an AR(1) correlation structure with varying correlation $\rho$ and true configuration $S^\star$.  In particular, we consider two configurations:
\begin{align*}
S_1^\star & = \{11,\ldots,15,31,\ldots,35\} \\
S_2^\star & = \{1,51,100,151,200,251,300,351,400,451\}. 
\end{align*}
Figure~\ref{fig:kl} shows $\hat{\lambda}$ chosen by maximizing the marginal likelihood in three different cases, and we argue that $\hat\lambda$ is at least in the ``right direction.'' In particular, when the true predictors are highly correlated, as in Panel~(a), $\hat{\lambda}$ tends to be positive which encourages highly correlated predictors to be selected; and when the true predictors have low correlation, as in Panel~(b), the estimate of $\lambda$ is close to 0 hence a nearly uniform prior for $S$.  The situation in Panel~(c) is different since the true predictors are minimally correlated while unimportant predictors are highly correlated. In this case, $\hat{\lambda}$ tends to be negative which discourages selecting the highly correlated ones that are likely unimportant.

\begin{figure}
\centering
    \begin{subfigure}[b]{0.35\textwidth}
                \centering
                \includegraphics[width=0.7\textwidth]{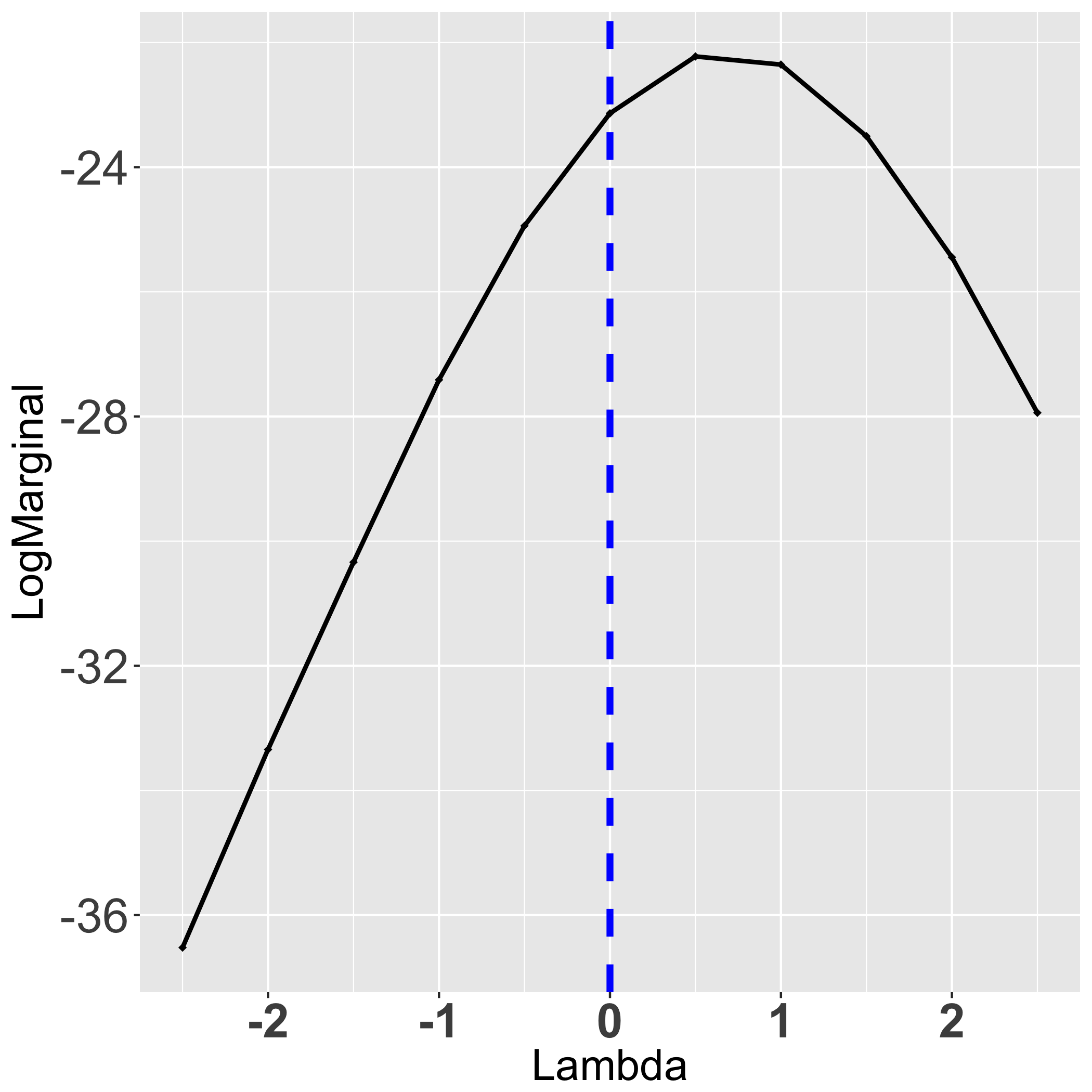}
                \caption{$\rho=0.8$, $S^\star = S_1^\star$}
                \label{fig:lampos}
    \end{subfigure}%
    \begin{subfigure}[b]{0.35\textwidth}
                    \centering
                \includegraphics[width=0.7\textwidth]{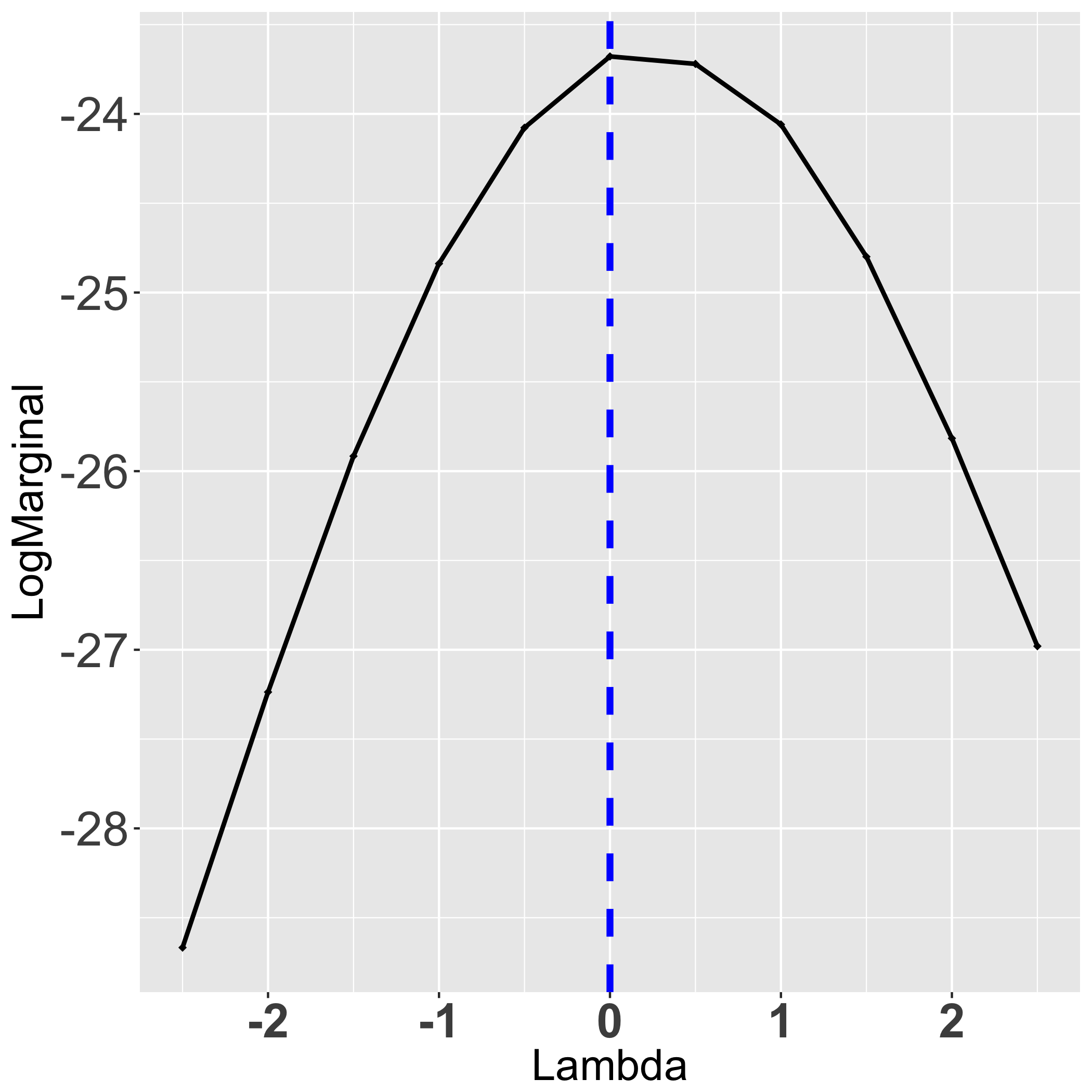}
                \caption{$\rho=0.1$, $S^\star=S_1^\star$}
                \label{fig:lam0}
    \end{subfigure}%
    \begin{subfigure}[b]{0.35\textwidth}
                    \centering
                \includegraphics[width=0.7\textwidth]{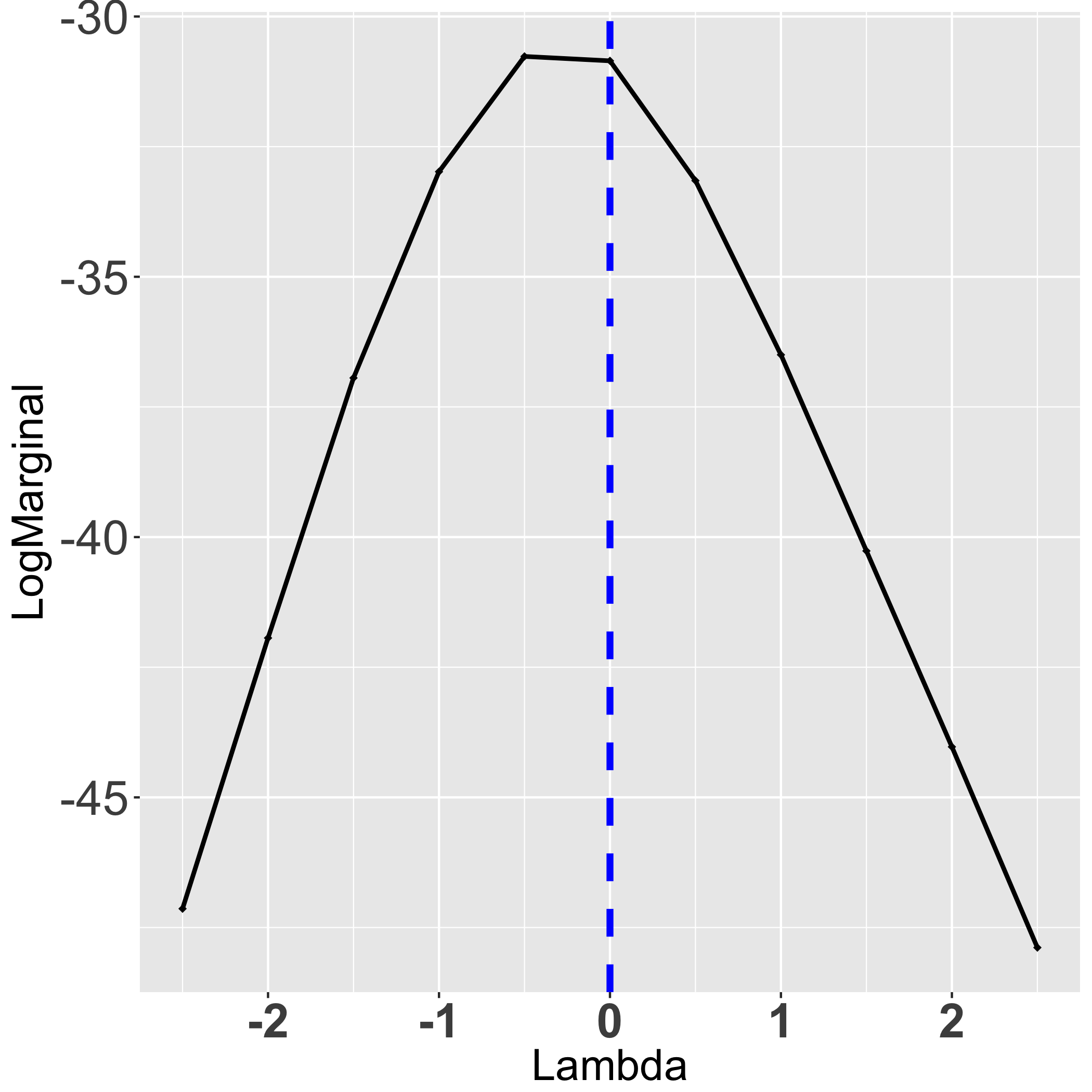}
                \caption{$\rho=0.8$, $S^\star = S_2^\star$}
                \label{fig:lamneg}
    \end{subfigure}%
    \caption{Expected log marginal likelihood versus $\lambda$, for  $\phi=0$, under different correlation structure of true configurations $S^\star$; see definition of $S_1^\star$ and $S_2^\star$ in the text.}
\label{fig:kl}
\end{figure}

\subsubsection{Choice of $g$}

Now, recall that $g$ determines the magnitude of the prior variance of $\beta_S$. If $g$ is sufficiently large, the conditional prior for $\beta_S$ is effectively flat; if $g$ is extremely tiny, then the posterior probability for $\beta_S$ will concentrate on the prior center $\phi_n\hat{\beta}_S$.  \cite{c19} proposed the unit information criteria, which amounts to taking $g=n$ in the regression setting with Zellner's prior.  \cite{c20} suggest a choice of $g=p^2$. Here, we use a local empirical Bayes estimator for $g$. That is, for given $S$ and $\lambda$, we choose a $g$ that maximizes the local marginal likelihood, that is,
\[ \hat g_S = \arg\max_g m_\lambda(y \mid S).  \]
In the special case where $\phi_n=0$ and $\lambda=-1$, and a conjugate prior for $\sigma^2$, \citet{c21} showed that $\hat g_S=\max\{F_S-1,0\}$, where $F_S$ is the usual $F$ statistic under model $S$ for testing $\beta_S=0$.  In general, our estimator, $\hat{g}_S$ must be computed numerically.

\subsubsection{Choice of $\phi$}
\label{SSS:phi}

In our choice of $\phi=\phi_n$, we seek to employ a meaningful amount of shrinkage while still maintaining the condition in Assumption~\ref{asmp:2}.  Towards this, if we view $\phi \hat\beta_{S^\star}$ as a shrinkage estimator, then it is possible to choose $\phi_n$ so that the corresponding James--Stein type estimate has smaller mean square error. In particular, a $\phi_n$ that achieves this is 
\[ \phi_n = 1-\frac{2\E\|\hat{\beta}_{S^\star}-\beta_{S^\star}^{\star}\|^2}{\|\beta_{S^\star}^{\star}\|^2 + \E\|\hat{\beta}_{S^\star}-\beta_{S^\star}^{\star}\|^2} \]
and, moreover, it can be shown that $1-\phi_n=O(s^\star\{n \|\beta_{S^\star}^\star\|\}^{-1})$.  Unfortunately, this $\phi_n$ still depends on $\beta^\star$, so we need to use some data-driven proxy for this.  We recommend first estimating $S^\star$ by $\hat S$ from the adaptive lasso, with $\hat\beta_{\hat S}$ and $\hat\sigma^2$ the corresponding least squares estimators, and then setting 
\[ \hat\phi_n = \Bigl[1-\frac{2\hat\sigma^2 \tr\{(X_{\hat S}^\top X_{\hat S})^{-1}\}}{\|\hat\beta_{\hat S}\|^2+\hat\sigma^2 \tr\{(X_{\hat S}^\top X_{\hat S})^{-1}\}} \Bigr]^+. \]
In practice, variable selection results are not sensitive to the choice of $\phi$ unless it is too close to 1.  That is, according to Figure~\ref{fig:phi}, we see good curvature in the log marginal likelihood for $\lambda$, with roughly the same maximizer, for a range of $\phi$.  The curves flatten out when $\phi$ is too close to 1, but that ``too close'' cutoff gets larger with $n$, consistent with the assumption that $1-\phi_n = O(n^{-1})$.  To ensure identifiability of $\lambda$, we manually keep our estimate of $\phi$ away from 1, in particular, we take $\tilde\phi_n = \min\{\hat\phi_n, 0.7\}$.

\begin{figure}[t]
\centering
\includegraphics[width=0.8\textwidth]{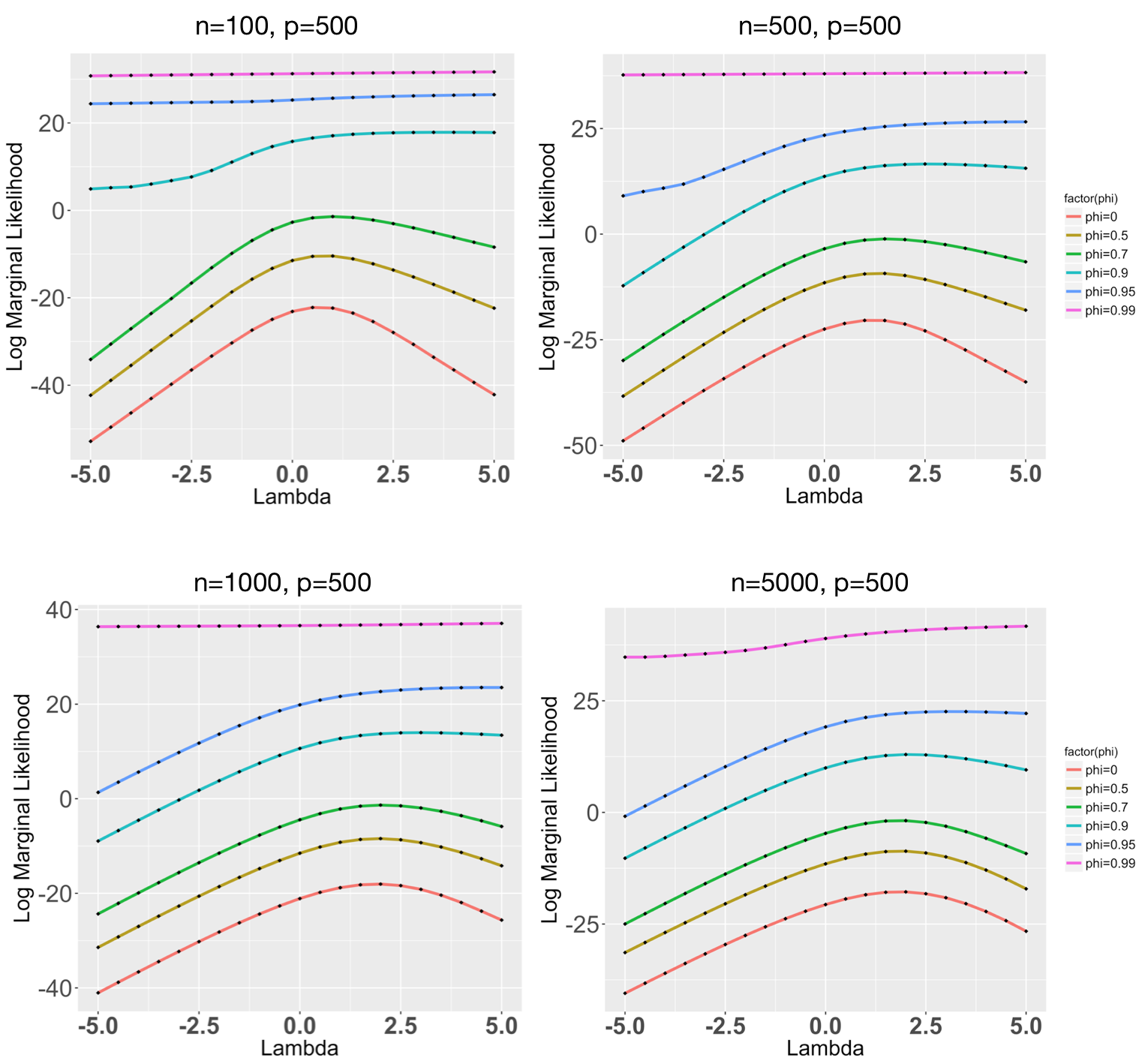}
\caption{Approximated log marginal likelihood for different values of $\phi$ with sample size $n=100, 500, 1000, 500$ and $p=500$, under Scenario~2 as is described in Section~\ref{s:sim}.}
\label{fig:phi}
\end{figure}

\subsubsection{Specification of remaining parameters}

It remains to specify the likelihood power $\alpha$, the tuning parameters $(a,c)$, specifying the prior on configuration size, and to specify a plug-in estimator for the error variance $\sigma^2$.  As in \citet{c12}, we take $\alpha=0.999$, $a=0.05$, and $c=1$.  For the error variance, we use the adaptive lasso to select a configuration and set $\hat\sigma^2$ equal to the mean square error for that selected configuration.  


\section{Simulation experiments}
\label{s:sim}

Here we investigate the variable selection performance of different methods in five simulated data settings. In each setting, $n=100$ and $p=500$ and the error variance $\sigma^2$ is set to 1. The first two settings have severe collinearity. We employ the first order autoregressive structure with $\rho=0.8$ as the covariance structure of the $n\times p$ design matrix $X$; and the true configuration $S^\star$ includes two blocks of variables that the first block contains the 11th to the 15th variable and the second block contains the 31st to the 35th variable. We explored both large and small signal cases as follows.

\begin{description}
\item[\sc Case 1:] $\beta_{S^\star} = (0.5, 0.55, 0.6, 0.65, 0.7, 0.75, 0.8, 0.85, 0.9, 0.95)^\top$ 
\item[\sc Case 2:] $\beta_{S^\star} = (1, 1.5, 2.0, 2.5, 3.0, 3.5, 4.0, 4.5, 5.0, 5.5)^\top$
\item[\sc Case 3:] In this case, we consider a block covariance setting, which is the same as the Case 4 in \cite{c11}. In this setting, interesting variables have common correlation $\rho_1=0.25$; uninteresting variables have common correlation $\rho_2=0.75$; the common correlation between interesting and uninteresting ones is $\rho_3=0.5$. The coefficients of the interesting variables are $\beta_{S^\star} = (0.6, 1.2, 1.8, 2.4, 3.0)^\top$.
\item[\sc Case 4:] This case is similar to Case 3, but let $\rho_1=0.75$, $\rho_2=0.25$, and $\rho_3=0.4$. Also, a larger $\beta_{S^\star} = (1, 1.5, 2.0, 2.5, 3.0)^\top$ is adopted.
\item[\sc Case 5:] This case is a low correlation case, which is set the same as Case 2 in \cite{c11}. All variables are set to have common correlation $\rho=0.25$ and the coefficients for interesting variables are $\beta_{S^\star} = (0.6, 1.2, 1.8, 2.4, 3.0)^\top$.
\end{description}

For each case, 1000 data sets are generated. Denoting the chosen configuration as $\hat{S}$, we compute $\prob(\hat{S}=S^\star)$ and $\prob(\hat{S}\supseteq S^\star)$ in these 1000 iterations to measure the performance of our method, denoted by ECAP.  For comparison purposes, we also consider the lasso \citep{c2}, the adaptive lasso \citep{c3}, the smoothly clipped absolute deviation \citep[SCAD,][]{c4}, the elastic net \citep[EN,][]{c26} and an empirical Bayes approach \citep[EB,][]{c12}. Tuning parameters in the first four methods are chosen by BIC.  The results are summarized in Table~\ref{table:sim}.

\begin{table}[!t]
\begin{center}
 \begin{tabular}{c c c c c}
 \hline
 Case & Method&$\prob(\hat{S}=S^\star)$&$\prob(\hat{S}\supseteq S^\star)$&Average $|\hat S|$\\
 \hline
 1 & lasso&0.082&0.996&13.61 (0.09)\\
  &alasso&0.397&0.930&10.73 (0.04)\\
   & EN&0.133&0.983 & 13.24 (0.20)\\
 & SCAD&0&0.001&12.36 (0.15)\\
 & EB&0.165&0.215 & 9.56 (0.17)\\
 & ECAP&0.263&0.342&9.65 (0.15)\\
\hline
 2 & lasso&0.297&1&11.65 (0.05)\\
  &alasso&0.356&0.412&9.33 (0.03)\\
  & EN&0.557&0.816 & 10.25 (0.07)\\
 & SCAD&0&0&7.93 (0.04)\\
 & EB&0.815&1 & 11.27 (0.91)\\
 & ECAP&0.994&1 &10.00 (0.00)\\
 \hline
3 & lasso&0&0.874&18.67 (0.12)\\
 &alasso&0.002&0.277&11.26 (0.10)\\
  & EN&0&0.945 & 19.82 (0.22)\\
 & SCAD&0.882&0.958&5.05 (0.01)\\
 & EB&0.560&0.670 & 4.69 (0.05)\\
 & ECAP&0.760&0.778 & 4.90 (0.08)\\
 \hline
4 & lasso&0.135&1&8.08 (0.09)\\
 &alasso&0.701&0.940&5.34 (0.03)\\
 & EN&0.327&0.997 & 7.33 (0.13)\\
 & SCAD&0.070&0.148&4.45 (0.04)\\
 & EB&0.793&0.822 & 4.87 (0.04)\\
 & ECAP&0.861&0.940 & 5.05 (0.07)\\
 \hline
5 & lasso&0.001&0.990&17.55 (0.15)\\
 &alasso&0.057&0.693&8.63 (0.11)\\
 & EN&0.005&0.991& 17.04 (0.28)\\
 & SCAD&0.419&0.908&5.88 (0.04)\\
 & EB&0.680&0.795 & 4.82 (0.04)\\
 & ECAP&0.827&0.919 & 4.95 (0.05)\\
 \hline
 \end{tabular}
\end{center}
\caption{Simulation results for Cases~1--5.}
\label{table:sim}
\end{table}

\ifthenelse{1=1}{}{
\begin{table}[h]
\begin{center}
 \begin{tabular}{c c c}
 \hline
 Method&$\prob(\hat{S}=S^\star)$&$\prob(\hat{S}\supseteq S^\star)$\\
 \hline
 lasso.BIC&0.308&1\\
 SCAD.BIC&0&0.003\\
 EN.BIC&0.120&1\\
 EB&0.815&1\\
 ECAP&0.994&1\\
 \hline
 \end{tabular}
\end{center}
\caption{Results for Case 2.}
\label{table:case2}
\end{table}
\begin{table}
\centering
 \begin{tabular}{c c c}
 \hline
 Method&$\prob(\hat{S}=S^\star)$&$\prob(\hat{S}\supseteq S^\star)$\\
 \hline
 lasso.BIC&0&0.015\\
 SCAD.BIC&0&0\\
 EN.BIC&0&0\\
 EB&0.560&0.670\\
 ECAP&0.760&0.778\\
 \hline
 \end{tabular}
\caption{Results for Case 3.}
\label{table:case3}
\end{table}
\begin{table}
\centering 
\begin{tabular}{c c c}
 \hline
 Method&$\prob(\hat{S}=S^\star)$&$\prob(\hat{S}\supseteq S^\star)$\\
 \hline
 lasso.BIC&0.158&0.994\\
 SCAD.BIC&0.001&0.030\\
 EN.BIC&0.003&0.864\\
 EB&0.793&0.822\\
 ECAP&0.861&0.940\\
 \hline
 \end{tabular}
\caption{Results for Case 4.}
\label{table:case4}
\end{table}
\begin{table}
\begin{center}
 \begin{tabular}{c c c}
 \hline
 Method&$\prob(\hat{S}=S^\star)$&$\prob(\hat{S}\supseteq S^\star)$\\
 \hline
 lasso.BIC&0.005&0.845\\
 SCAD.BIC&0.045&0.980\\
 EN.BIC&0.135&0.835\\
 EB&0.680&0.795\\
 ECAP&0.827&0.919\\
 \hline
 \end{tabular}
\end{center}
\caption{Results for Case 5.}
\label{table:case5}
\end{table}
}

According to these results, ECAP performs significantly better than lasso, SCAD, and EN in terms of the probability of choosing the true configuration. It also has uniformly better performance compared with EB, which is expected since the new ECAP method takes the correlation information into account. However, when considering $\prob(\hat{S}\supseteq S^\star)$, ECAP is not always the highest, e.g., Case~1.  Note that $\prob(\hat{S}=S^\star)$ and $\prob(\hat{S}\supseteq S^\star)$ for ECAP are always close to each other, which is not the case for lasso or EN. This is because the ECAP method is more likely to shrink the coefficients of unimportant predictors to zero, which is desirable if the goal is to find the true $S^\star$. 

\section{Real-data illustration}
\label{s:real}

In this section, we  examine our method in a real data example and evaluate its performance against other prevalent approaches including lasso, SCAD and the penalized credible region approach in \citet{c10}. We use the data from an experiment conducted by \cite{c29} that studies the genetics of two inbred mouse populations (B6 and BTBR). The data include $22575$ gene expressions of 31 female and 29 male mice.  Some phenotypes, including phosphoenopiruvate (PEPCK) and glycerol-3-phosphate acyltransferase (GPAT) were also measured by quatitative real-time PCR. The data are available at Gene Expression Omnibus data repository (\url{http://www.ncbi.nlm.nih.gov/geo}; accession number GSE3330).

We choose PEPCK and GPAT as response variables. Given that this is an ultra-high dimensional problem, we use marginal correlation based screening method, to screen down from $22575$ genes to $1999$ genes. Combining the screened 1999 genes with the sex variable, the final dimension of the predictor matrix is $p=2000$. After screening, we apply our method to the data and select a best subset of predictors $\hat{S}$. Then we use the posterior mean of $\beta_S$ as the estimator for $\beta$, for given $\hat{S}$ and $y$. The posterior distribution for $\beta_S$ is normal with 
\[ \text{mean} = \big(X_{\hat{S}}^\top X_{\hat{S}}+V_{\hat{S}}^{-1}\big)^{-1}\big(X_{\hat{S}}^\top y+\phi V_{\hat{S}}^{-1}\hat{\beta}_{\hat{S}}\big) \quad \text{and} \quad \text{cov} = \sigma^2\big(X_{\hat{S}}^\top X_{\hat{S}}+V_{\hat{S}}^{-1}\big)^{-1}, \]
where $V_{\hat{S}}=gk_{\hat{S}}\big(X_{\hat{S}}^\top X_{\hat{S}}\big)^{\lambda}$. For hyperparameters $\lambda$, $\phi$ and $g$, we can plug in their corresponding estimators which can be obtained in the same way as in Section~4.

In order to evaluate the performance of our approach, we randomly split the sample into a training data set of size 55 and a test set of 5. First, we apply our variable selection method to the training set and obtain the selected variables. Then conditioning on this model, we estimate the regression coefficients using the above method. Based on the estimated regression coefficient, we predict the remaining 5 observations and calculate the  prediction loss. This process is repeated 100 times, and an estimated mean square prediction error (MSPE) along with its standard error can be computed; see Table~\ref{table:real}.

\begin{table}[t]
\begin{center}
 \begin{tabular}{cc cc c}
 \hline
  &
      \multicolumn{2}{c}{PEPCK} &
      \multicolumn{2}{c}{GPAT}\\
 Method &MSPE&Model Size&MSPE&Model Size\\
 \hline
 ECAP ($p=2000$)&1.02 (0.07)& 5.04 (0.19)&2.26 (0.18)&8.34 (0.33)\\
 lasso ($p=2000$)&3.03 (0.19)&7.70 (0.96)&5.03 (0.42)&3.30 (0.79)\\
 BCR.joint ($p=2000$)&2.03 (0.14)&9.60 (0.46)&3.83 (0.34)&4.20 (0.43)\\
 BCR.marginal ($p=2000$)& 1.84 (0.14)&23.3 (0.67)& 5.33 (0.41)& 21.8 (0.72)\\
 SIS+SCAD ($p=22575$)&2.82 (0.18)&2.30 (0.09)&5.88 (0.44)& 2.60 (0.10)\\
 ECAP ($p=22575$)&0.72 (0.07)& 4.93 (0.30)&1.66 (0.52)&7.92 (0.73)\\
 \hline
 \end{tabular}
\end{center}
\caption{Mean square prediction error (MSPE) and average configuration size in the real example of Section~\ref{s:real}; numbers in parentheses are standard errors. Simulation results except for ECAP are from \cite{c10}}
\label{table:real}
\end{table}
 
In Table~\ref{table:real}, BCR.joint and BCR.marginal denote methods using joint credible sets and marginal credible sets respectively, for details, see \cite{c10}. The first four methods, ECAP, lasso, BCR.joint and BCR.marginal are implemented on the screened data with dimension $p=2000$. The fifth row is the results of sure independence screening (SIS) combined with SCAD applied to the full data $p=22575$ and the last row is the outcomes from directly applying ECAP to unscreened dataset. The stopping rules of lasso, SCAD, BCR.joint and BCR.marginal are based on BIC.

In terms of MSPE, ECAP outperforms all the other methods significantly in both PEPCK and GPAT cases, given the estimated standard errors. Moreover, the MSPE from ECAP is even smaller for the full dataset compared with the screened data. And for the model size, on average, ECAP, lasso, BCR.joint and SIS+SCAD select models with comparable sizes while BCR.marginal always chooses larger models.  Overall, ECAP performs very well in this real data example compared to these other methods in terms of both MSPE and model size.


\ifthenelse{1=1}{}{
\section{Conclusion}
\label{s:discuss}

In this paper, we propose a new empirical Bayes method that takes advantage of data's correlation structure. We prove that this new ECAP method maintains good properties in both estimation and model selection under standard assumptions. In addition to maintaining optimal posterior convergence rates, our simulated and real data examples demonstrate that the adoption of this correlation-adaptive prior leads to improve performance, especially in high collinearity cases. 
}

\section*{Acknowledgments}

The work presented herein is partially supported by the U.S.~National Science Foundation, grant DMS--1737933.

\appendix

\section{Technical details}
\label{S:proofs}

\subsection{Preliminary lemmas}

Before getting to the proofs of the main theorems, we need to suitably lower-bound the posterior denominator and upper-bound the posterior numerator, the latter depending on the type of neighborhood being considered.  In particular, for a generic measurable subset $A$ of the parameter space, write 
\[ \Pi^n(A) = \frac{N_n(A)}{D_n} = \frac{\int_A \sum_S \pi(S) R_n(\beta_{S+},\beta^*)^\alpha \pi_\lambda(\beta_S \mid S) \, d\beta_S}{\int \sum_S \pi(S) R_n(\beta_{S+},\beta^*)^\alpha \pi_\lambda(\beta_S \mid S) \, d\beta_S}, \]
where $R_n(\beta_{S+},\beta^*) = L_n(\beta_{S+}) / L_n(\beta^\star)$ is the likelihood ratio, with $\beta_{S+}$ the $p$-vector corresponding to $\beta_S$ with zeros filled in around the indices in $S$.  Lemma~\ref{lem:denominator} gives a general lower bound on the denominator $D_n$.  

In what follows, $\beta^\star$ will denote the true and sparse coefficient vector in $\RR^p$, with $S^\star = S_{\beta^\star}$ the corresponding configuration of complexity $s^\star = |S^\star|$.  

\begin{lem}
\label{lem:denominator}
Given $(\alpha, g, \lambda, \sigma^2)$, define 
\[ 
c = c(\alpha, g, \lambda, \sigma^2) = 
\begin{cases} 
\frac{1}{2}\log\{1+\alpha g \kappa(S^\star)^{1+\lambda}\}, & \text{if $\lambda \in [0, \infty)$} \\
\frac{1}{2}\log\{1+\alpha g\kappa(S^\star)\}, & \text{if $\lambda \in [-1, 0)$} \\
\frac{1}{2}\log\{1+\alpha g\kappa(S^\star)^{-\lambda}\}, & \text{if $\lambda \in (-\infty, -1)$}, 
\end{cases}
\] 
where $\kappa(S^\star)$ is the condition number of $X_{S^\star}^\top X_{S^\star}$.  Then under Assumptions 1--4,  the denominator $D_n$ satisfies $\prob_{\beta^\star}\{D_n < \pi(S^\star) e^{-c s^\star}\} \to 0$ as $n \to \infty$.  
\end{lem}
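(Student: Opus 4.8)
The plan is to lower-bound $D_n$ by discarding all but the single term $S=S^\star$ in the sum over configurations, so that
\[ D_n \ge \pi(S^\star)\int R_n(\beta_{S^\star+},\beta^\star)^\alpha\,\pi_\lambda(\beta_{S^\star}\mid S^\star)\,d\beta_{S^\star}. \]
Because $\beta^\star$ is supported on $S^\star$ we have $\|Y-X\beta^\star\|^2=\|\eps\|^2$; writing $\|Y-X_{S^\star}\beta_{S^\star}\|^2=\|Y-\hat y_{S^\star}\|^2+(\beta_{S^\star}-\hat\beta_{S^\star})^\top X_{S^\star}^\top X_{S^\star}(\beta_{S^\star}-\hat\beta_{S^\star})$ with $\|Y-\hat y_{S^\star}\|^2=\|\eps\|^2-\|P_{S^\star}\eps\|^2$ ($P_{S^\star}$ the orthogonal projection onto the column space of $X_{S^\star}$), the remaining integral in $\beta_{S^\star}$ is Gaussian and is evaluated by the same completion of squares that produces \eqref{eq:marginal}. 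This gives
\[ D_n \ge \pi(S^\star)\,e^{\frac{\alpha}{2\sigma^2}\|P_{S^\star}\eps\|^2}\prod_{i=1}^{s^\star}\bigl(1+\alpha g k_{S^\star}d_{S^\star,i}^{\lambda+1}\bigr)^{-1/2}\exp\Bigl\{-\tfrac{\alpha(1-\phi_n)^2}{2\sigma^2}\textstyle\sum_{i=1}^{s^\star}\tfrac{d_{S^\star,i}\,\theta_{S^\star,i}^2}{1+\alpha g k_{S^\star}d_{S^\star,i}^{\lambda+1}}\Bigr\}. \]

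I would then bound the product. Since $k_{S^\star}=\bigl(\sum_j d_{S^\star,j}^{-1}\bigr)\big/\bigl(\sum_j d_{S^\star,j}^\lambda\bigr)$, the elementary inequality $\sum_j a_j/\sum_j b_j\le\max_j a_j/b_j$ gives $k_{S^\star}d_{S^\star,i}^{\lambda+1}\le d_{S^\star,i}^{\lambda+1}\max_j d_{S^\star,j}^{-1-\lambda}$; bounding this ratio of eigenvalues by the appropriate power of the condition number $\kappa(S^\star)$ — which power is dictated by the signs of $\lambda+1$ and of $-1-\lambda$, and this is exactly what separates the cases $\lambda\in[0,\infty)$, $\lambda\in[-1,0)$, and $\lambda\in(-\infty,-1)$ — yields $1+\alpha g k_{S^\star}d_{S^\star,i}^{\lambda+1}\le e^{2c}$, hence $\prod_{i=1}^{s^\star}(1+\alpha g k_{S^\star}d_{S^\star,i}^{\lambda+1})^{-1/2}\ge e^{-cs^\star}$ with $c$ as in the statement (note $\kappa(S^\star)=O(1)$ by Assumption~\ref{asmp:3}, so $c=O(1)$). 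It then remains to show that the two exponential factors multiply to at least $1$ with probability tending to one, i.e.\ $(1-\phi_n)^2\sum_i d_{S^\star,i}\theta_{S^\star,i}^2/(1+\alpha g k_{S^\star}d_{S^\star,i}^{\lambda+1})\le\|P_{S^\star}\eps\|^2$ on an event of probability tending to one. Bounding the left side by $(1-\phi_n)^2\|\hat y_{S^\star}\|^2$ and using $\hat y_{S^\star}=X_{S^\star}\beta_{S^\star}^\star+P_{S^\star}\eps$, $\|X_{S^\star}\beta_{S^\star}^\star\|^2\le n\,u(s^\star)\|\beta_{S^\star}^\star\|^2$ (Assumption~\ref{asmp:3}), and $1-\phi_n=O\bigl(s^\star(n\|\beta_{S^\star}^\star\|)^{-1}\bigr)$ with $s^\star(n\|\beta_{S^\star}^\star\|)^{-1}\to0$ (Assumption~\ref{asmp:2}), the left side is $O(s^{\star2}/n)+o(1)\cdot\|P_{S^\star}\eps\|^2$; since $\sigma^{-2}\|P_{S^\star}\eps\|^2\sim\chi^2_{s^\star}$, the Chernoff bound $\prob(\chi^2_{s^\star}<\epsilon s^\star)\le(e\epsilon)^{s^\star/2}$ applied with $\epsilon\asymp s^\star/n\to0$ (Assumption~\ref{asmp:1}) shows $\|P_{S^\star}\eps\|^2\gg s^{\star2}/n$ with probability tending to one, and the claim follows.

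Putting the pieces together yields $D_n\ge\pi(S^\star)e^{-cs^\star}$ on an event whose probability tends to one, which is the assertion. The main obstacle is the last step: the data-driven prior centering at $\phi_n\hat\beta_{S^\star}$ is what introduces the exponential correction term, and showing it is harmless requires balancing $(1-\phi_n)^2\|\hat y_{S^\star}\|^2$ against the $\chi^2_{s^\star}$ fluctuations of $\|P_{S^\star}\eps\|^2$ — this is where Assumptions~\ref{asmp:1}--\ref{asmp:3} all come into play. By contrast, the eigenvalue bookkeeping that produces the three $\lambda$-regimes in the definition of $c$ is routine.
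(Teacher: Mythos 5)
Your proof is correct and follows essentially the same route as the paper: restrict $D_n$ to the single term $S=S^\star$, evaluate the conjugate Gaussian integral, bound the normalizing product by $e^{-cs^\star}$ via the condition-number bookkeeping on $k_{S^\star}d_{S^\star,i}^{\lambda+1}$, and show the exponent $\tfrac{\alpha}{2\sigma^2}\bigl(\|P_{S^\star}\eps\|^2 - B_n\bigr)$ is nonnegative with probability tending to one using Assumptions~\ref{asmp:1}--\ref{asmp:3}. The only (harmless) differences are cosmetic: you bound the centering correction through $(1-\phi_n)^2\|\hat y_{S^\star}\|^2$ rather than through the $O(n)$ operator norm of $Q_{S^\star}^{-1}$ applied to $\|\hat\beta_{S^\star}\|^2$, and you make the $\chi^2_{s^\star}$ lower-tail step explicit via a Chernoff bound where the paper argues $A_n=O_p(s^\star)$ versus $B_n=o_p(s^\star)$.
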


\begin{proof}
Since $D_n$ is a sum of non-negative terms, we get the trivial bound 
\begin{align*}
D_n & > \pi(S^\star) \int R_n(\beta_{S^\star},\beta^{\star}_{S^\star})^{\alpha} \N\big(d\beta_{S^\star} \mid \phi \hat{\beta}_{S^{\star}}, \sigma^2gk_{S^\star}(X_{S^\star}^\top X_{S^\star})^{\lambda} \big)  \\
& = \pi(S^\star)\prod_{i=1}^{s^\star}(1+\alpha g k_{S^\star}d_{S^\star,i}^{1+\lambda})^{-1/2}\exp\Bigl\{\frac{\alpha}{2\sigma^2}(A_n-B_n)\Bigr\},
\end{align*}
where 
\begin{align*}
A_n = n (\hat{\beta}_{S^\star}-\beta^\star_{S^\star})^\top \bigl(n^{-1} X_{S^\star}^\top X_{S^\star}\bigr)  (\hat{\beta}_{S^\star}-\beta^\star_{S^\star}), \quad B_n = (1-\phi)^2 \hat{\beta}_{S^\star}^\top Q_{S^\star}^{-1}\hat{\beta}_{S^\star}
\end{align*}
are both non-negative, and the $Q_S$ matrix is defined as 
\begin{equation}
\label{eq:Q.matrix}
Q_S = (X_{S}^\top  X_{S})^{-1}+\alpha g k_{S} (X_{S}^\top X_{S})^{\lambda}, \quad S \subseteq \{1,2,\ldots,p\}.
\end{equation}
From the well-known sampling distribution of $\hat\beta_{S^\star}$, we have 
\[ A_n \sim \sigma^2 \, \chisq(s^\star) = O_p(s^\star). \]
Next, for $B_n$, under Assumption~\ref{asmp:3}, it can be verified that the maximal eigenvalue of $Q_{S^\star}^{-1}$ is $O(n)$.  Therefore, $B_n \lesssim n (1-\phi)^2 \|\hat\beta_{S^\star}\|^2$.  
Since 
\[ \|\hat\beta_{S^\star}\|^2 \lesssim \|\hat\beta_{S^\star} - \beta_{S^\star}^\star\|^2 + \|\beta_{S^\star}^\star\|^2, \]
and the first term is $O_p(s^{\star}/n)$, it follows from Assumptions~\ref{asmp:1} and \ref{asmp:2} that $B_n = o_p(s^\star)$.  
This implies that $\prob_{\beta^\star}(A_n \leq B_n) \to 0$ which, in turn, implies that the exponential term in the lower bound for $D_n$ is no smaller than 1.  Therefore, 
\[ D_n > \pi(S^\star)\prod_{i=1}^{s^\star}(1+\alpha g k_{S^\star}d_{S^\star,i}^{1+\lambda})^{-1/2} \]
and the product can be lower-bounded by $e^{-c s^\star}$ for $c$ as defined above.  Finally, we have that $D_n \leq \pi(S^\star) e^{-cs^\star}$ with vanishing $\prob_{\beta^\star}$-probability, as was to be shown.  
\end{proof}

Next is an upper bound on the posterior numerator, $N_n = N_n(B_{\eps_n})$, where $B_{\eps_n} \subset \RR^p$ is the complement of a neighborhood like that in \eqref{eq:B.set}, relevant to Theorem~\ref{thm:prediction.rate}.  

\begin{lem}
\label{lem:numerator}
There exists a constant $d=d(\alpha,\sigma^2)$ such that 
\[ \sup_{\beta^\star} \E_{\beta^\star}(N_n) \leq e^{-d \eps_n} \sum_{S: |S| \leq R} \psi(|S|)^{|S|} \pi_\lambda(S), \]
where
\[ 
\psi(s)^2 = \begin{cases}
\omega(s)^{2(\lambda+1)}\bigl[1+\frac{q\phi^2}{g}\omega(s)^{-(\lambda+1)}\bigr]^{1-\frac{1}{q}} & \text{if $\lambda \in [0,\infty)$} \\
\omega(s)^{2}\bigl[1+\frac{q\phi^2}{g}\omega(s)^{-1}\bigr]^{1-\frac{1}{q}} & \text{if $\lambda \in [-1, 0)$} \\
\omega(s)^{-2\lambda}\bigl[1+\frac{q\phi^2}{g}\omega(s)^{\lambda}\bigr]^{1-\frac{1}{q}} & \text{if $\lambda \in (-\infty, -1)$}, 
\end{cases}
\]
where $q = (h-1)/h$ and $h > 1$ depends only on $\alpha$.  
\end{lem}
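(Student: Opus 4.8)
The plan is to control $\E_{\beta^\star}(N_n)$ configuration by configuration. All the integrands being non-negative, Fubini's theorem gives
\[
\E_{\beta^\star}(N_n) = \sum_{S:\,|S|\le R}\pi_\lambda(S)\int_{\{\beta_S:\,\beta_{S+}\in B_{\eps_n}\}}\E_{\beta^\star}\{R_n(\beta_{S+},\beta^\star)^\alpha\,\pi_\lambda(\beta_S\mid S)\}\,d\beta_S,
\]
so it suffices to bound the inner expectation for each fixed $S$ and $\beta_S$. Writing $Y=X\beta^\star+\eps$ with $\eps\sim\N(0,\sigma^2 I_n)$, one has $R_n(\beta_{S+},\beta^\star)=\exp\{-\tfrac{1}{2\sigma^2}(\|X\beta_{S+}-X\beta^\star\|^2-2\eps^\top X(\beta_{S+}-\beta^\star))\}$, an explicit log-linear function of the Gaussian $\eps$; the only thing preventing an outright evaluation of the expectation is that the prior $\pi_\lambda(\beta_S\mid S)$, being centered at the data-dependent point $\phi\hat\beta_S$, is itself a Gaussian function of $\eps$, so the two factors are correlated through the noise.

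To decouple them I would apply Hölder's inequality, $\E_{\beta^\star}\{R_n^\alpha\,\pi_\lambda(\beta_S\mid S)\}\le\{\E_{\beta^\star}(R_n^{\alpha h})\}^{1/h}\{\E_{\beta^\star}(\pi_\lambda(\beta_S\mid S)^{1/q})\}^{q}$, where $q=(h-1)/h$ and $h>1$ is chosen small enough that $\alpha h<1$; this is the sole role of the hypothesis that $h$ depend only on $\alpha$, and it is where $\alpha<1$ enters. For the first factor, a one-line Gaussian moment computation gives $\E_{\beta^\star}(R_n^{\alpha h})=\exp\{-\tfrac{\alpha h(1-\alpha h)}{2\sigma^2}\|X\beta_{S+}-X\beta^\star\|^2\}$, finite precisely because $\alpha h<1$, so that on $B_{\eps_n}$ — where $\|X\beta_{S+}-X\beta^\star\|^2>\eps_n$ — the factor $\{\E_{\beta^\star}(R_n^{\alpha h})\}^{1/h}$ is at most $e^{-d\eps_n}$ with $d=\alpha(1-\alpha h)/(2\sigma^2)>0$, a constant depending only on $(\alpha,\sigma^2)$. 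For the second factor, the randomness in $\pi_\lambda(\beta_S\mid S)$ enters only through $\hat\beta_S\sim\N((X_S^\top X_S)^{-1}X_S^\top X\beta^\star,\sigma^2(X_S^\top X_S)^{-1})$; raising the Gaussian density to the power $1/q$ and integrating out $\hat\beta_S$ is a Gaussian convolution, yielding a multiplicative constant times a Gaussian density in $\beta_S$, and then taking the $q$-th power and integrating over $\beta_S$ — after enlarging $\{\beta_S:\beta_{S+}\in B_{\eps_n}\}$ to all of $\RR^{|S|}$ — is another elementary Gaussian integral. What remains is a quantity expressed through the eigenvalues $d_{S,1},\dots,d_{S,|S|}$ of $X_S^\top X_S$ together with $g$, $\phi$, $q$, and $\lambda$, and crucially it no longer involves $\beta^\star$.

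It remains to bound this eigenvalue expression by $\psi(|S|)^{|S|}$. Here I would invoke the eigenvalue controls set up in Section~\ref{s:inf}: each $d_{S,i}$ lies in $[d_{S,\min},d_{S,\max}]$ with $d_{S,\max}/d_{S,\min}=\kappa(S)\le\omega(|S|)$, and the standardizing factor $k_S$, being itself a ratio of power-sums of the $d_{S,i}$, obeys matching two-sided bounds. Because the eigenvalues of $(X_S^\top X_S)^{\lambda}$ are ordered the same way as those of $X_S^\top X_S$ when $\lambda\ge0$ but reversed when $\lambda<0$, the direction in which these bounds must be applied depends on the sign of $1+\lambda$ and of $\lambda$ — exactly the source of the three-case definition of $\psi$. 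Carrying out the estimate gives $\int_{\{\beta_S:\,\beta_{S+}\in B_{\eps_n}\}}\E_{\beta^\star}\{R_n^\alpha\,\pi_\lambda(\beta_S\mid S)\}\,d\beta_S\le e^{-d\eps_n}\psi(|S|)^{|S|}$, uniformly in $\beta^\star$; summing against $\pi_\lambda(S)$ over $|S|\le R$ yields the claim. The main obstacle is these last two steps: tracking the normalizing constants through the Gaussian convolution and the $q$-th-power integration, and then carrying out the eigenvalue/condition-number estimates uniformly across the three $\lambda$-regimes so that the product collapses to exactly $\psi(|S|)^{|S|}$; one also must check at the outset that $h$ can be chosen in $(1,1/\alpha)$ so that every Gaussian integral in sight converges.
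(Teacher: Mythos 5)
Your proposal follows essentially the same route as the paper's proof: H\"older's inequality with exponents $h$ and $h/(h-1)$ (choosing $h\in(1,1/\alpha)$ so that $h\alpha<1$) to decouple the likelihood ratio from the data-dependent prior, the R\'enyi-divergence identity giving the factor $e^{-d\eps_n}$ with $d=\alpha(1-h\alpha)/(2\sigma^2)$ on $B_{\eps_n}$, enlargement of the integration domain to all of $\RR^{|S|}$, and condition-number bounds on $k_S d_{S,i}^{\lambda+1}$ split into the three $\lambda$-regimes. The paper carries out your ``Gaussian convolution'' step by diagonalizing $X_S^\top X_S$ and using the moment generating function of non-central chi-squares, which is the same computation, so your sketch is correct and matches the paper's argument.
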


\begin{proof}
\label{SS:proof0}
Let us consider the expectation of $N_n$, given the true distribution of y, i.e. $y \sim \N(X\beta^\star, \sigma^2I)$.
Then by H{\"o}lder's inequality, for constants $h>1$ and $q=\frac{h-1}{h}$, we can find an upper bound for $\E_{\beta^\star}(N_n)$, 
\begin{align}
\label{eq:N}
\E_{\beta^\star}(N_n) \le \sum_S\pi(S)\int_{B_{\epsilon_n}}J_n(\beta_S)^{\frac{1}{h}} K_n(\beta_S)^{\frac{1}{q}} d\beta_S,
\end{align}
where
\begin{align*}
J_n(\beta_S) & =\E_{\beta^\star}\Bigl[\Bigl\{\frac{\N(y \mid X_S\beta_{S},\sigma^2I)}{\N(y \mid X_{S^\star}\beta^\star_{S^\star},\sigma^2I)}\Bigr\}^{h\alpha}\Bigr]  \\
K_n(\beta_S) & = \E_{\beta^\star}\bigl[\N^q(\beta_S \mid \phi\hat{\beta}_S,\sigma^2gk_S(X_S^\top X_S)^{\lambda})\bigr].
\end{align*}
If $h\alpha<1$, then the Renyi divergence formula gives that
    \begin{eqnarray}
       \label{eq:J} J_n(\beta_S)=e^{-\frac{\alpha(1-h\alpha)}{2\sigma^2}\|X\beta_{S+}-X\beta^\star\|^2} \leq e^{-[\alpha(1-h\alpha)/2\sigma^2]\epsilon_n}, \quad \forall \; \beta_{S+} \in B_{\eps_n},
     \end{eqnarray}
where the $p$-dimensional vector $\beta_{S+}$ is an augmented form of $\beta_S$ with  $\beta_{S+, i}=0$ if $i \notin S$.  Next, for $K_n$, after factoring out the non-stochastic terms in the multivariate normal density, there is an expectation of exponential quadratic form  to be dealt with, i.e.,
    \begin{align}
    \label{eq:Z}\E_{\beta^\star}\Big[\exp\Big\{-\frac{q}{2\sigma^2gk_S}Z\Big\}\Big],
    \end{align}
    where $Z=(\beta_S-\phi\hat{\beta}_S)^\top (X_S^\top X_S)^{-\lambda}(\beta_S-\phi\hat{\beta}_S)$ and $\hat{\beta}_S$ is the least square estimator under configuration $S$. Let $\beta_S^\star$ denote the mean of $\hat{\beta}_S$, then $\beta^\star_S=(X_{S}^\top X_{S})^{-1}X_S^TX_{S^\star}\beta_{S^\star}^\star$. Applying a spectral decomposition on $X_S^\top X_S$ in $Z$, we have,
    \[Z/(\sigma^2\phi^2)=\nu_S^\top \Lambda_S^{-(\lambda+1)}\nu_S,\]
    where $\nu_S=\Lambda_S^{1/2}\Gamma_S^\top (\beta_S-\phi\hat{\beta}_S)/(\sigma\phi)$. $\Lambda_S$ is a diagonal matrix whose diagonal elements are the corresponding eigenvalues, and $\Gamma_S$ is a matrix with columns being corresponding eigenvectors. It is not difficult to show, 
    \[\nu_S \sim \N\Big(\Lambda_S^{1/2}\Gamma_S^\top (\beta_S-\phi\beta^\star_S)/(\sigma\phi), I\Big),\]
    which implies $\nu_{S,i}$ are iid $\N(d_{S,i}^{1/2}\Gamma_{S,i}^\top (\beta_S-\phi\beta^\star_S)/(\sigma\phi), 1)$, where $d_{S,i}$ is the $i^\text{th}$ eigenvalue of $X_S^\top X_S$ and $\Gamma_{S,i}$ is the $i^\text{th}$ eigenvector.  Hence, $\nu_{S,i}^2$ has a non-central $\chisq(1)$ distribution with  non-centrality parameter  $\mu_{S,i}=\frac{1}{\sigma^2\phi^2}(\beta_S-\phi\beta^\star_S)^\top \Gamma_{S,i}d_{S,i}\Gamma_{S,i}^\top (\beta_S-\phi\beta^\star_S)$. By taking advantage of the independence of  $\nu_{S,i}^2$s and  using the moment generating function of the non-central chi-square distribution, \eqref{eq:Z} can be written as, 
    \begin{align}
    \label{eq:Zsolution}
\prod_{i=1}^s(1-2t_{S,i})^{-\frac{1}{2q}}\exp\Bigl\{\frac{\mu_{S,i}}{2q} \Bigl(\frac{1}{1-2t_{S,i}}-1 \Bigr)\Bigr\},
    \end{align}
    where $t_{S,i}=-q\phi^2 d_{S,i}^{-(1+\lambda)}/2gk_S < 1/2$.  It is clear that \eqref{eq:Zsolution} is a non-decreasing function with respect to $t_{S,i}$. And when $\lambda \ge 0$, $t_{S,i} \le t_S:= -\frac{\phi^2q}{2g}\omega(s)^{-(\lambda+1)}$. Then  $\forall \beta_S \in \mathbb{R}^{|S|}$, we can obtain an upper bound for \eqref{eq:Z} by replacing $t_{S,i}$ with $t_S$ in \eqref{eq:Zsolution},
    \begin{align}
       \label{eq:Zbound} U(\beta_S)=(1-2t_S)^{-\frac{s}{2q}}\exp\Bigl\{\Bigl(\frac{2t_S}{1-2t_S}\Bigr)\frac{1}{2q\sigma^2\phi^2}(\beta_S-\phi\beta^\star_S)^\top(X_S^\top X_S)(\beta_S-\phi\beta^\star_S)\Bigr\}
    \end{align}
For $\lambda<-1$ and $-1 \le \lambda <0$, we can get the same expression of $U_n(\beta_S)$ with $t_S=-\phi^2q \omega(s)/2g$ and $t_S=-\phi^2q \omega(s)^\lambda/2g$ respectively.  Therefore, 
    \begin{align}
        \label{eq:K}
        K_n(\beta_S) \le (2\pi\sigma^2gk_S)^{-\frac{s}{2}}D(S)^{-\frac{\lambda}{2}} U_n(\beta_S)
    \end{align}
  Since $J_n(\beta_S)$ and $K_n(\beta_S)$ are non-negative, we upper-bound the integral in \eqref{eq:N}  by 
  \begin{align}
  \label{eq:NR}
      \int_{\RR^{|S|}}&J_n(\beta_S)^{\frac{1}{h}}K_n(\beta_S)^{\frac{1}{q}}d\beta_S
    \end{align}
By plugging \eqref{eq:K} and \eqref{eq:J} into \eqref{eq:NR}, and integrating out $\beta_S$ , we bound  $\E_{\beta^\star}(N_n)$ by,
\[e^{-[\alpha(1-h\alpha)/2\sigma^2]\epsilon_n}\sum_{S:|S|\le R}\psi(|S|)^s\pi(S)\]
where function $\psi(s)$ is defined as above.
\end{proof}

\subsection{Proof of Theorem~\ref{thm:prediction.rate}}
\label{SS:proof1}

With Lemmas~\ref{lem:denominator} and \ref{lem:numerator}, the expectation of posterior probability of event $B_n = B_{M\eps_n}$, which is defined in \eqref{eq:B.set}, can be bounded by,
\begin{equation*}
\E_{\beta^\star}\bigl\{\Pi^n(B_n)\bigr\} \le e^{cs^\star-dM\eps_n}\frac{\sum_{S: s \le R}\psi(s)^s \pi(S)}{\pi(S^\star)}=e^{cs^\star-dM\eps_n}\frac{\sum_{s=1}^R\psi(s)^sf_n(s)}{\pi(S^\star)}.
\end{equation*}
While for the prior, if $\lambda \ge 0$, we can get,
$$\pi(S^\star ) \ge \omega(s^\star)^{-\lambda/2}f_n(s^\star){p \choose s^\star}^{-1}.$$  
Therefore the upper bound for the  posterior probability can be written as,
\begin{equation*}
\label{eq:posterior.B}
\begin{aligned}
   \E_{\beta^\star}\bigl\{\Pi^n(B_n)\bigr\}  
   \le e^{cs^\star-dM\eps_n}\xi_n,
\end{aligned}
\end{equation*}
where 
\[\xi_n=\frac{\omega(s^\star)^{\lambda/2}{p \choose s^\star}}{f_n(s^\star)}\sum_{s=1}^R\psi(R)^sf_n(s).\] 
Taking logarithm on both sides, we get
\begin{equation}
\log\E_{\beta^\star}\bigl\{\Pi^n(B_n)\bigr\} \le \Bigl(\frac{cs^\star}{\eps_n}-Md+\frac{\log\xi_n}{\eps_n} \Bigr)\eps_n.
\end{equation}
We require $\eps_n$ to have a certain rate such that the upper bound for posterior probability can vanish. A preliminary requirement for $\eps_n$ is $s^\star/\eps_n \to 0$, in order make $e^{cs^\star-d\eps_n}$ as $o(1)$. In addition, $\eps_n$ should satisfy $\log\xi_n=O(\eps_n)$.   Therefore, as $n \to \infty$, $cs^\star/\eps_n \to 0$ and $\log(\xi_n)/\eps_n  \to K$. Thus, for any $M$ satisfying $Md>K$, we will have \[ \log\E_{\beta^\star}\bigl\{\Pi^n(B_n)\bigr\} \to -\infty.\]

Next, we establish the rate for $\log\xi_n$. Under Assumption~\ref{asmp:1} and \ref{asmp:2}, $\omega(s^\star)$ is bounded with probability 1. By Stirling’s formula, we have that 
\[\log{p \choose s^\star}\le s^\star\log(p/s^\star)\{1+o(1)\}.\] 
Given that $f_n(s)=c^{-s}p^{-as}$, we can also have 
\[-\log f_n(s^\star) \le s^\star\log(cs^\star)+as^\star\log(p/s^\star)
=O(s^\star\log(p/s^\star)).\]
Since we have ruled out cases with extremely ill-conditioned $X_S^\top X_S$, $\omega(s)$ is bounded above by $Cp^r$. Thus, for the nonnegative $\lambda$ case, 
\[\sum_{s=1}^R \psi(R)^s f_n(s) \lesssim p^{R\big[r(1+\lambda)-a\big]}.\] 
Therefore, when $\lambda \ge 0$, the rate of $\eps_n$ should be 
\[\max\{R\big[r(1+\lambda)-a\big]\log p,   s^\star\log(p/s^\star)  \}.\] 

The proofs for $\lambda<0$ are similar. Therefore, the rate $\eps_n$ then can be rewritten as
\[ \eps_n = \max\{q(R, \lambda, r, a), s^\star \log(p/s^\star)\}, \]
where function $q$ has already been defined in Theorem~\ref{thm:prediction.rate}.

\subsection{Proof of Theorem~\ref{thm:dimension}}
\label{SS:proof2}

Followed by Lemma~\ref{lem:numerator} and Theorem~\ref{thm:prediction.rate}, we can get
    \begin{equation*}
    \E_{\beta^\star}^{1/h}\Big[\Big\{\frac{\nm(y \mid X_S\beta_{S+},\sigma^2I)}{\nm(y \mid X_S\beta_{S^\star},\sigma^2I)}\Big\}^{h\alpha}\Big]= \exp\big\{-\tfrac{\alpha(1-h\alpha)}{2\sigma^2}\|X\beta_{S+}-X\beta_{S^\star}\|^2\big\} \le 1.
    \end{equation*}
Then it is not difficult to show that,
 $$\E_{\beta^\star}\{N_n(U_n)\} \le \sum_{s=\rho s^\star}^R \psi(s)^sf_n(s). $$
With the help of Lemma~\ref{lem:denominator}, the posterior probability of event $U_n$ can be bounded as,
    \begin{equation}
    \label{eq:U.posterior}
    \E_{\beta^\star}\{\Pi^n(U_n)\} \le e^{cs^\star}\frac{\omega(s^\star)^{\lambda/2}{p \choose s^\star}}{f_n(s^\star)}\sum_{s=\rho s^\star}^R\psi(s)^sf_n(s)
    \end{equation}
    From Theorem~\ref{thm:prediction.rate}, we have 
\[\log\frac{\omega(s^\star)^{\lambda/2}{p \choose s^\star}}{f_n(s^\star)} \le (a+1+o(1))s^\star\log(p/s^\star). \] 
In addition, for $\lambda \ge 0$, if $a>r(1+\lambda)$, we can get 
    \[\sum_{s=\rho s^\star}^R\psi(s)^sf_n(s) \lesssim \exp\{-\rho s^\star [a-r(1+\lambda)]\log p\}.\] 
When $\lambda \ge 0$, $\rho > \rho_0= (a+1)\{ a-r(1+\lambda)\}^{-1}$, then $\sum_{s=\rho s^\star}^R\psi(s)^sf_n(s)$ dominates the other two terms in \eqref{eq:U.posterior}. Therefore, $\E_{\beta^\star}\{\Pi^n(U_n)\}$ will vanish as $n \to \infty$. Similarly for $\lambda <0 $, we can get the same result if $\rho > \rho_0$.

\subsection{Proof of Theorem~\ref{thm:estimation.rate}}
\label{SS:proof3}

Let $|S_{\beta-\beta^\star}|$ be the number of non-zero entries of $(\beta-\beta^\star)$. Then,
\[\|X(\beta-\beta^\star)\|^2 > n \ell(|S_{\beta-\beta^\star}|)\|\beta-\beta^\star\|^2.\] 
If $a>\max\{1+\lambda, 1, 1-\lambda\}$ and $\rho > \rho_0$ in \eqref{eq:rho0}, by Theorem~\ref{thm:dimension} and monotonicity of $\ell(s)$, we can get
\[\E_{\beta^\star}\bigl[\Pi^n(\{\beta: \ell(|S_{\beta-\beta^\star}|) \ge \ell((\rho+1)s^\star)\}) \Bigr] \to 1. \]
If we set $\delta_n$ as in \eqref{eq:delta.rate} and use Theorem~\ref{thm:prediction.rate}, we get 
\[ \E_{\beta^\star}\{\Pi^n(V_{M\delta_n}) \} \le \E_{\beta^\star} \{ \Pi^n(B_{M\ell((\rho+s^\star))\delta_n})\} \to 0, \]
where $B$ and $V$ are defined in \eqref{eq:B.set} and \eqref{eq:V.set}.

\subsection{Proof of Theorem~\ref{thm:selection}}
\label{SS:proof4}
We segment the proof\footnote{Some of the arguments presented here refer to \cite{c12}, but there are some oversights in the selection consistency results presented in the published version of that paper.  The version available at {\tt arXiv:1406.7718} contains corrections of those arguments.} of Theorem~\ref{thm:selection} into two parts. First, under Assumptions~\ref{asmp:1} and \ref{asmp:3}, we aim to show that, $\E_{\beta^\star}[\Pi^n(\{\beta: S_\beta \supset S_{\beta^\star}\})] \to 0$.  Second, under the beta-min condition, we prove that $\E_{\beta^\star}[ \Pi^n(\{\beta: S_\beta \not\supseteq S_{\beta^\star}\})] \to 0$.  We only consider positive $\lambda$ case here.  The proofs in Parts~1 and 2 below can go through the same way when $\lambda < 0$.

\subsubsection*{Part 1}

Let $S$ be any configuration containing but not equal to the true model $S^\star$, i.e. $S \supset S^\star$. Then the posterior for $S$ can be written as,
\begin{align*}
\pi^n(S) & \le \Pi^n(S) \big/ \Pi^n(S^\star) \\
& =F_SR_S
\exp\big[-\tfrac{\alpha}{2\sigma^2}\bigl\{ y^\top(P_{S\star}-P_S)y+(1-\phi)^2\hat{\beta}_S^\top Q_S^{-1}\hat{\beta}_S-(1-\phi)^2\hat{\beta}_{S^\star}^\top Q_{S^\star}^{-1}\hat{\beta}_{S^\star}\big\}\bigr] \\
& \leq F_S R_S  \exp\bigl[ \tfrac{\alpha}{2\sigma^2}\bigl\{ y^\top (P_{S}-P_{S^\star})y+(1-\phi)^2\hat{\beta}_{S^\star}^\top Q_{S^\star}^{-1}\hat{\beta}_{S^\star}\bigr\}\bigr],
\end{align*}
where $F_S = \pi(S) / \pi(S^\star)$, $Q_S$ is as in \eqref{eq:Q.matrix}, and 
\begin{align*}
R_S =\frac{\prod_{i=1}^s(1+\alpha gk_Sd_{S,i}^{\lambda+1})^{-\frac{1}{2}}}{\prod_{i=1}^{s^\star}(1+\alpha gk_{S^\star}d_{S^\star,i}^{\lambda+1})^{-\frac{1}{2}}}.
\end{align*}
Applying H{\"o}lder's inequality with the same constants in \eqref{eq:N}, $h>1$ and $q=(h-1)/h$, we get that $\E_{\beta^\star}\{\pi^n(S)\}$ is upper-bounded by 
\begin{equation}
\label{eq:S.posterior}
F_S R_S\E_{\beta^\star}^{1/h}\bigl[\exp\{\tfrac{\alpha h}{2\sigma^2} y^\top(P_S-P_{S^\star})y\}\bigr] \E_{\beta^\star}^{1/q}\bigl[\exp\{\tfrac{\alpha q(1-\phi)^2}{2\sigma^2}\hat{\beta}_{S^\star}^\top Q_{S^\star}^{-1}\hat{\beta}_{S^\star}\}\bigr].
\end{equation}
First, given that $S\supset S^{\star}$, $(P_S-P_S^\star)$ is idempotent and $(P_S-P_S^\star)X_{S^\star}=0$. Therefore, $y^\top (P_S-P_{S^\star})y/\sigma^2$ has a central chi-square distribution $\chisq (s-s^\star)$. If $h\alpha<1$, using moment generating function of central chi-square, the first expectation term in \eqref{eq:S.posterior} can be written as,
$$
\E_{\beta^\star}^{1/h}\bigl[\exp\{\tfrac{h \alpha}{2\sigma^2}y^\top (P_S-P_{S^\star})y\}\bigr]=(1-h\alpha)^{-(s-s^\star)/2h}.
$$
Second, from the spectral decomposition $X_{S^\star}^\top X_{S^\star}=\Gamma_{S^\star}\Lambda_{S^\star}\Gamma_{S^\star}^\top$, if $u=\Lambda_{S^\star}^{1/2}\Gamma_{S^\star}^\top \hat{\beta}_{S^\star}/\sigma$, then,
\begin{align*}
   \hat{\beta}_{S^\star}^\top Q_{S^\star}^{-1}\hat{\beta}_{S^\star}/\sigma^2
    =u^\top \big(I_s+\alpha g k_{S^\star} \Lambda_{S^\star}^{\lambda+1}\big)^{-1} u=\sum_{i=1}^{s^\star}\frac{1}{1+\alpha g k_{S^\star} d_{S^\star, i}^{\lambda+1}} u_i^2.
\end{align*}
Obviously, $u \sim \N(\Lambda_{S^\star}^{1/2}\Gamma_{S^\star}^\top \beta_{S^\star}^\star/\sigma, I_{s^\star})$, and this implies $u_i \overset{iid}{\sim} \N(d_{S^\star, i}^{1/2}\Gamma^\top_{S^\star, i}\beta_{S^\star}^\star/\sigma, 1)$.  It follows that the $u_i^2$s are independent non-central chi-square random variables, with non-centrality parameter $\mu_i=d_{S^\star, i}^{1/2}\Gamma^\top_{S^\star, i}\beta_{S^\star}^\star/\sigma$ and degrees of freedom of $1$. Using the same argument as in \eqref{eq:Zsolution}--\eqref{eq:Zbound}, we get,
\begin{align*}
\E_{\beta^\star}^{1/q}\bigl[\exp\{\tfrac{\alpha q(1-\phi)^2}{2\sigma^2}\hat{\beta}_{S^\star}^\top Q_{S^\star}^{-1}\hat{\beta}_{S^\star}\}\bigr] \le (1-t)^{-\frac{s^\star}{2q}}\exp\big\{\alpha n\|\beta_{S^\star}^\star\|^2 \lambda_{\max}(S^\star)\tfrac{t}{1-t}\big\},
\end{align*}
where $0 < t \lesssim (1-\phi)^2$.  
Since $t = O(n^{-1}) = o(1/s^\star)$, we have 
\[ (1-t)^{-s^\star}=\bigl(1+\tfrac{t}{1-t}\bigr)^{s^\star}=O(1).\]
Also, by Assumptions~\ref{asmp:2} and \ref{asmp:3}, the exponential term is asymptotically bounded by $e^{\alpha s^\star}$.  Therefore, we can conclude that the second expectation term in \eqref{eq:S.posterior} is also asymptotically upper-bounded by $e^{\alpha s^\star}$.  

Next, it is clear that 
\[ R_S \leq \prod_{i=1}^{s^\star} (1 + \alpha g k_{S^\star} d_{S^\star,i}^{\lambda+1})^{1/2}. \]
Since $k_S d_{S,i}^{\lambda+1}$ satisfies
\[\kappa(S)^{-(\lambda+1)} \leq k_S d_{S,i}^{\lambda + 1} \leq \kappa(S)^{\lambda + 1} , \]
and the lower and upper bounds are stable according to Assumption~\ref{asmp:3}, it follows that $R_S$ is upper-bounded by $m^{s^\star}$, where $m = (1+\alpha g \kappa(S^\star)^{\lambda+1})^{1/2}$.  


For $F_S$, as in Appendix~\ref{SS:proof1}, we have  
\[ \pi(S) \le {p \choose s}^{-1}\omega(s)^{\frac{\lambda}{2}}f_n(s) \quad \text{and} \quad \pi(S^\star) \ge {p \choose s^\star}^{-1}\omega(s^\star)^{-\frac{\lambda}{2}}f_n(s^\star).\]
Now we can bound $\E_{\beta^\star}\{\pi^n(S)\}$ above by, 
\[ e^{Gs^\star}\omega(s)^{\lambda}{p \choose s}^{-1}{p \choose s^\star}\big(\frac{z}{cp^a}\big)^{s-s^\star}\times O(1),\] 
where $G = \alpha + \log m$ and $z > 0$ is a constant.  By Theorem~\ref{thm:dimension}, we only need to consider configurations of size no more than $\rho s^\star$, where $\rho>\rho_0$ in \eqref{eq:rho0} and $a>r(1+\lambda)$.  Therefore,
\begin{align}
\E_{\beta^\star}\{\Pi^n(\beta:S_\beta \supset S_{\beta^\star})\} & =\sum_{S \supset S^\star}\E_{\beta^\star}\{\pi^n(S)\} \nonumber \\
& \leq e^{Gs^\star}\omega(\rho s^\star)^{\lambda}\sum_{s=s^\star+1}^{\rho s^\star}\frac{{p-s^\star \choose p-s}{p\choose s^\star}}{{p\choose s^\star}}\big(\frac{z}{cp^a}\big)^{s-s^\star} \label{eq:part1bound}
\end{align}
Under Assumption~\ref{asmp:3},  $\omega(\rho s^\star)$ is bounded, so following the results in \cite{c12}, \eqref{eq:part1bound} turns out to be,
\[ \E_{\beta^\star}\{\Pi^n(\beta:S_\beta \supset S_{\beta^\star})\} \le  \frac{e^{Gs^\star}s^\star}{p^{a}}\times O(1). \]
So, if $p$ and $a$ are such that $p^{a} \gg s^\star e^{Gs^\star}$, then $\E\{\Pi^n(S \supset S^\star)\}$ will go to $0$ as $n \to \infty$.

\subsubsection*{Part 2}
Consider a configuration $S$ satisfying $S \nsupseteq S^\star$, in Part 1, we already have \eqref{eq:S.posterior}, which is
 \begin{equation*}
        \E_{\beta^\star}\{\pi^n(S)\} \le F_SR_S \E_{\beta^\star}^{1/h}\bigl[\exp\{\tfrac{\alpha h}{2\sigma^2}y^\top (P_S-P_{S^\star})y\}\bigr]\E_{\beta^\star}^{1/q}\bigr[\exp\{\tfrac{\alpha q(1-\phi)^2}{2\sigma^2}\hat{\beta}_{S^\star}^\top Q_{S^\star}^{-1}\hat{\beta}_{S^\star}\}\bigr].
\end{equation*}
For the first expectation term, if we plug in $y=X_{S^\star}\beta_{S^\star}+\sigma z$, where $z \sim \N(0, I_n)$, then according to the results in \citet{c12}
\begin{align*}
y^\top(P_S-P_{S^\star})y & =-\|(I-P_S)X_{S^\star}\beta_{S^\star}^\star\|^2-2\sigma z^\top(I-P_S)X_{S^\star}\beta_{S^\star}^\star+\sigma^2 z^\top (P_S-P_{S^\star})z \\
& \le -\|(I-P_S)X_{S^\star}\beta_{S^\star}^\star\|^2-2\sigma z^\top (I-P_S)X_{S^\star}\beta_{S^\star}^\star+\sigma^2 z^\top (P_S-P_{S\cap S^\star})z. 
\end{align*}
Since $z^\top(I-P_S)X_{S^\star}\beta_{S^\star}^\star$ and  $z^\top (P_S-P_{S\cap S^\star})z$ are independent, using the moment generating function of normal and chi-square distributions, we can get,
\begin{align}
\E_{\beta^\star}^{1/h}\bigl[\exp\{&\tfrac{\alpha h}{2\sigma^2}y^\top (P_S-P_{S^\star})y\} \bigr] \notag \\
& \le (1-h\alpha)^{-\frac{1}{2h}(|S|-|S \cap S^\star|)}\exp\{-\tfrac{\alpha(1-h\alpha)}{2\sigma^2}\|(I-P_S)X_{S^\star}\beta_{S^\star}^\star\|^2\} \label{eq:noncentral}
\end{align}
With some algebraic manipulation, we can show that \[\|(I-P_S)X_{S^\star}\beta_{S^\star}^\star\|^2=\|(I-P_S)X_{S^\star\cap S^c}\beta_{S^\star \cap S^c}^\star\|^2,\]
and then based on Lemma 5 of \cite{arias2014}, we get
\[
\|(I-P_S)X_{S^\star\cap S^c}\beta_{S^\star \cap S^c}^\star\|^2 \ge n\ell(s^\star)\|\beta_{S^\star \cap S^c}^\star\|^2.
\]
By the beta-min condition \eqref{eq:beta.min}, it follows that $\|\beta_{S^\star \cap S^c}^\star\|^2 \ge \varrho_n^2(s^\star-|S^\star\cap S|)$, and, hence, 
\[ \exp\bigl\{-\tfrac{\alpha(1-h\alpha)}{2\sigma^2}\|(I-P_S)X_{S^\star}\beta_{S^\star}^\star\|^2 \bigr\} \le p^{-M(s^\star-|S^\star\cap S|)}. \]
Then following the results in \cite{c12},  we can get,
\[ \E_{\beta^\star}\{\Pi^n(\beta:S_\beta \nsupseteq S_{\beta^\star})\} \lesssim e^{Gs^\star} \sum_{s=0}^{\rho s^\star} \sum_{t=1}^{\min(s, s^\star)} \frac{{s^\star \choose t}{p-s^\star \choose s-t}{p \choose s^\star}}{{p \choose s}}(cp^a)^{s^\star-s}\{(1-h\alpha)p^{-M}\}^{s^\star-t}. \]
When $s < s^\star$, we have,
\[ \E_{\beta^\star}\{\Pi^n(\beta:S_\beta \nsupseteq S_{\beta^\star})\} \lesssim e^{Gs^\star} \sum_{s=0}^{s^\star-1}(cp^{1+a-M})^{s^\star-s} \lesssim \frac{m^{s^\star}}{p^{M-(a+1)}}, \quad M>a+1. \]
When $s \ge s^\star$, we have,
\[ \E_{\beta^\star}\{\Pi^n(\beta:S_\beta \nsupseteq S_{\beta^\star})\} \lesssim e^{Gs^\star}((1-h\alpha)p^{1-M}) \sum_{s=s^\star}^{\rho s^\star}(cp^{a-1})^{s^\star-s} \lesssim \frac{e^{Gs^\star}}{p^{M-(a+1)}}, \quad a>1. \]
Furthermore, if $p^{M-(a+1)}$ satisfies $p^{M-(a+1)} \gg e^{Gs^\star}$, the two fractions above will finally go to $0$ as $n$ and $p$ go to infinity.


\bibliographystyle{apalike} 
\bibliography{ecap9}

\end{document}